\documentclass[12pt]{article}

\usepackage[margin=1in]{geometry}
\usepackage[T1]{fontenc}
\usepackage{lmodern}
\usepackage{amsmath,amsthm,amssymb,mathrsfs,mathtools}
\usepackage{booktabs}
\usepackage{microtype}
\usepackage[dvipsnames]{xcolor}
\usepackage{hyperref}
\usepackage{enumitem}

\hypersetup{colorlinks=true,linkcolor=blue,citecolor=blue,urlcolor=blue}

\newtheorem{theorem}{Theorem}[section]
\numberwithin{theorem}{section} \numberwithin{equation}{section}

\newtheorem{prop}[theorem]{Proposition}

\theoremstyle{remark}
\newtheorem{remark}[theorem]{Remark}

\numberwithin{equation}{section}
\newcommand{\dd}{\mathop{}\!\mathrm{d}}

\allowdisplaybreaks

\title{Photon spheres, elliptic surfaces, and Seiberg--Witten geometry}

\author{%
	Cordell Blankenship
	\\
	\small Dept.\!~of Mathematics \& Sciences,
	Roane State Community College, Harriman, TN 37748
	\\
	\small \texttt{blankenshipdc1@roanestate.edu}
	\\[0.75em]
	Andreas Malmendier%
	\thanks{A.M. acknowledges support from the Simons Foundation through
		grant no.~202367.}
	\\
	\small Dept.\!~of Mathematics \& Statistics,
	Utah State University, Logan, UT 84322
	\\
	\small \texttt{andreas.malmendier@usu.edu}
	\\[0.75em]
	Michael T. Schultz
	\\
	\small Dept.\!~of Mathematics,
	Virginia Tech, Blacksburg, VA 24060
	\\
	\small \texttt{michaelschultz@vt.edu}
}

\date{}

\begin{document}
	\maketitle
	
	\begin{abstract}
		We study null geodesics in Reissner--Nordstr\"om--de~Sitter spacetime and in
		its \(\omega=-2/3\) Kiselev deformation through the elliptic curves defined
		by the radial equation. The relation of these curves to the non-critical
		\(E_7\) Seiberg--Witten family yields differential equations for
		finite-distance light deflection, and explains the boundary source term
		through a suitable gauge of the Seiberg--Witten differential. This geometry
		separates the extremal point \(Q^2=M^2\) from the Argyres--Douglas point
		\(Q^2=9M^2/8\). We also determine the Kiselev critical and horizon loci, and
		show that, at fixed nonzero angular momentum, the Argyres--Douglas and
		ultracold loci meet only at zero Killing energy.
	\end{abstract}
	
	\noindent\textbf{Keywords:} gravitational lensing; photon spheres; elliptic
	surfaces; Picard--Fuchs equations; Seiberg--Witten geometry;
	Argyres--Douglas points

	\section{Introduction}
	\label{sec:introduction}
	The appearance of elliptic functions in null-geodesic problems on black-hole
	backgrounds is classical. Such functions already occur in the work of Droste
	and Hagihara on Schwarzschild geodesics~\cite{Droste1917,Hagihara1931}, in
	Darwin's analysis of the strong-deflection regime~\cite{Darwin1959}, and in
	modern treatments based on Weierstrass uniformization~\cite{GibbonsVyska2012}.
	Lensing by charged black holes, and its weak- and strong-field limits, has been
	studied extensively; representative references are~\cite{EiroaEtAl2002,Sereno2004,Bozza2002,ZhaoTangHe2016}.
	For spacetimes that are not asymptotically flat, the observable bending angle
	also contains local endpoint angles, a point emphasized in the literature on the
	cosmological constant and finite-distance lensing~\cite{RindlerIshak2007,IshiharaEtAl2016}.
	Villanueva et al.~\cite{Villanueva:2013zta} studied the light deflection angle in
	Reissner--Nordstr\"om--anti-de~Sitter spacetimes from the point of view of
	elliptic functions.
	\par More recently, the deflection angle in Schwarzschild, Reissner--Nordstr\"om, and
	Reissner--Nordstr\"om--de~Sitter spacetimes has been studied using
	Picard--Fuchs differential equations, which arise in algebraic geometry for
	families of elliptic curves. Such equations form a linear holonomic system, and
	govern how the period integrals of the curve, and hence the geometric information
	they encode, vary with the underlying complex structure
	\cite{MR229641,MR233825,MR282990,deligne_equations_1970}.\footnote{The linear holonomic system is obtained from a flat connection on a vector bundle that has geometric origin. The associated
		vector bundle, whose fibers are the cohomology groups of the curves, is
		holomorphic and carries the Hodge filtration. It is equipped with a canonical
		flat connection, the Gauss--Manin connection, whose local flat sections
		determine the holonomic linear system; see loc.~cit. This is made explicit in
		Section~\ref{sec:normalfunction}.} This perspective is available because the
	equation of motion for null geodesics in the equatorial plane, which expresses
	the deflection angle through elliptic functions, takes the form
	\begin{equation}
		\label{eq-general_quartic}
		y^2 = q(x) \, , \quad y := \frac{\dd x}{\dd\phi} \, ,
	\end{equation}
	where $x$ is a suitably normalized radial variable, $\phi$ is the azimuthal angle, and $q(x)$ is a
	quartic polynomial whose coefficients are (dimensionless) rational functions of
	the physical parameters of the black hole. We call $q(x)$ the \emph{photon
		quartic}. On the one hand, the equation of motion for $\phi=\phi(x)$ is naturally
	related to elliptic functions determined by the photon quartic; on the other hand,
	the compactification of the locus of solutions of \eqref{eq-general_quartic} is a
	family of genus-one curves, parametrized by the coefficients of the photon
	quartic. Specifics for the Reissner--Nordstr\"om--de~Sitter black hole are given in
	\eqref{eq:dimensionless} and \eqref{eq:photon-polynomial-general}. The Jacobian of
	this genus-one curve is an elliptic curve whose periods determine the associated
	homogeneous Picard--Fuchs equation. Sasaki and Suzuki treated the Schwarzschild
	deflection integral as an open elliptic period and derived an inhomogeneous
	Picard--Fuchs equation~\cite{SasakiSuzuki2021}. Sasaki subsequently obtained the
	two-variable system for the Reissner--Nordstr\"om family and related its
	compatibility condition to an algebraic Painlev\'e~VI solution~\cite{Sasaki2024}.
	At the limiting charge $Q^2/M^2=9/8$, where the photon and anti-photon spheres
	coalesce, the strong-deflection expansion is non-logarithmic~\cite{Sasaki2025}.
	The present article identifies the global family of rational elliptic surfaces
	underlying these differential equations, and uses its geometry to interpret both
	their homogeneous and inhomogeneous parts.
	\par Periods of families of elliptic curves satisfy Picard--Fuchs
	equations~\cite{Griffiths1969,Stiller1981}; normal functions extend this
	period geometry to algebraic cycles and relative chains
	\cite{CarlsonHain1989}; and classical binary-quartic invariants recover
	the Jacobian of a genus-one quartic~\cite{Fisher2008}.  Rational elliptic
	surfaces with four singular fibers were classified by
	Herfurtner~\cite{Herfurtner1991}, and the
	particular family below occurs in Doran's study of algebraic
	isomonodromic deformations~\cite{Doran2001}.  Such surfaces also underlie
	rank-one Seiberg--Witten systems
	\cite{SeibergWitten1994,Shimizu2006,ArgyresLotitoLuMartone2016}.  Of
	special relevance is the trigonometric \(E_7\) curve of Minahan,
	Nemeschansky and Warner (MNW)
	\cite{MinahanNemeschanskyWarner1997}.  Its additive Kodaira degeneration
	provides a rank-one Argyres--Douglas point
	\cite{ArgyresDouglas1995}.
	\par Our main results are as follows.
	\begin{enumerate}
		\item The Jacobians of the genus-one curves \eqref{eq-general_quartic} defined by
		the photon quartics of Reissner--Nordstr\"om--de~Sitter spacetime form a
		one-parameter family of rational elliptic surfaces with generic
		configuration \(III^*+3I_1\).  Its Gauss--Manin connection is exactly the
		homogeneous system underlying the Reissner--Nordstr\"om equations
		of~\cite{Sasaki2024}. This is shown in Sections~\ref{sec:surface} and~\ref{sec:GMPF}.
		\item The coordinate deflection integral is a local lift of the normal
		function defined by the lensing divisor, as explained in Section~\ref{sec:normalfunction}.  A natural
		Seiberg--Witten primitive admits a horizontal gauge in which its
		Picard--Fuchs image is exactly the form whose endpoint pairing gives the
		inhomogeneous source. We explain this in detail in Section~\ref{sec:SW}.
		\item After a change of variables, the elliptic fibers agree with the
		equal-mass MNW \(E_7\) curve \eqref{eq:E7dict}, \eqref{eq:ec3}.  We compute the
		transformation of the two-form and the residues of the quartic primitive in
		Section~\ref{ssec:SWmassnormalization}, and compare the two induced special
		K\"ahler metrics in Section~\ref{ssec:SWmetrics}.
		\item The value \(s=1/4\) creates a semistable \(I_2\) fiber at
		\(t=0\) but does not merge the physical strong-field node, whereas
		\(s=9/32\) creates a type-\(II\) cusp and the critical exponent \(-1/6\). This
		analysis is carried out in Section~\ref{sec:enhancements}.  For the Kiselev
		deformation, we derive the full cusp locus in Section~\ref{sec:Kiselev_horizons_AD},
		pull back the two-variable Gauss--Manin system, and prove that the cusp
		and ultracold loci meet at fixed \(L\ne0\) only when \(E=0\).
	\end{enumerate}
	\par The term ``Kiselev quintessence'' is understood as in
	\cite{Kiselev2003}.  Its source is anisotropic and is not a quintessence
	scalar field in the standard cosmological sense~\cite{Visser2020}; we
	therefore use \emph{Kiselev deformation}.  Strong lensing in the
	\(\omega=-2/3\) neutral and charged Kiselev backgrounds was studied
	in~\cite{YounasEtAl2015,AzregAinouEtAl2017}.  The cold, Nariai and
	ultracold limits of charged de~Sitter black holes follow the standard
	classification~\cite{Romans1992,CastroMarianiToldo2023}.  We also use
	the established relation between circular null orbits and Lyapunov
	exponents~\cite{CardosoEtAl2009}.
	\begin{remark}
		\label{rmk:sw_curves}
		A similar analysis may be carried out for the elliptic integrals appearing in
		Reissner--Nordstr\"om--anti-de~Sitter spacetimes, as studied by Villanueva et
		al.~\cite{Villanueva:2013zta}. One recovers the same structure of elliptic
		fibrations, and a similar relation to Seiberg--Witten curves. For brevity, we omit
		this analysis; it may become the subject of future work. It is also of interest
		that the Seiberg--Witten curves of the classical $N_f=1,2$ cases of
		$\mathrm{SU}(2)$ supersymmetric gauge theory, as studied by Seiberg and
		Witten~\cite{SeibergWitten1994,MR1306869}, have appeared since 2020 in the
		perturbative analysis of quasinormal modes (QNM) of certain black-hole solutions
		\cite{MR4420567,MR4354399,MR4403520,MR4067647}. Most recently, this analysis has
		been extended to the extremal Reissner--Nordstr\"om--de~Sitter
		solution~\cite{Wang:2026xxx}. This has been termed the SW/QNM correspondence, and it
		forms an intriguing line of research that appears to be independent of the
		work undertaken here.
	\end{remark}
	%
	\section{From null geodesics to the photon curve}
	\label{sec:photon-curve}
	Consider the static and spherically symmetric metric
	\begin{equation}
		\dd s^2=-f(r)\,\dd T^2+\frac{\dd r^2}{f(r)}
		+r^2\bigl(\dd\theta^2+\sin^2\theta\,\dd\phi^2\bigr).
		\label{eq:metric}
	\end{equation}
	On the equatorial plane, the conserved energy and angular momentum of a
	null geodesic are $E=f(r)\dot T$ and $L=r^2\dot\phi$, where a dot denotes
	differentiation with respect to an affine parameter.  For \(L\ne0\), with
	\(U=1/r\), the null condition reduces to
	\begin{equation}
		\left(\frac{\dd U}{\dd\phi}\right)^2
		=\frac{E^2}{L^2}-U^2f(1/U)
		=\frac{1}{b^2}-U^2f(1/U)\quad(E\ne0),
		\label{eq:orbit-general}
	\end{equation}
	where \(b=L/E\) is the impact parameter when \(E\ne0\).
	\par For Reissner--Nordstr\"om--de~Sitter spacetime supplemented by a
	Kiselev term, take
	\begin{equation}
		f(r)
		=
		1-\frac{2M}{r}+\frac{Q^2}{r^2}
		-\frac{\gamma}{r^{1+3\omega}}
		-\frac{\Lambda r^2}{3}.
		\label{eq:Kiselev_f_general}
	\end{equation}
	Here \(M>0\), \(Q^2\geq 0\), and
	\(-1<\omega<-1/3\).  The stress tensor supporting the Kiselev
	term is anisotropic, although its angular average has equation of
	state \(p_q=\omega\rho_q\).  In the normalization in which Einstein's
	equation is \(G_{\mu\nu}=8\pi G T_{\mu\nu}\), its energy density is
	\begin{equation}
		8\pi G\,\rho_q(r)
		=
		-\frac{3\omega\gamma}{r^{3(1+\omega)}}.
		\label{eq:Kiselev_density}
	\end{equation}
	Across the Kiselev range \(-1<\omega<-1/3\), nonnegative energy density
	requires $\gamma\geq0$.  At the two endpoints of this range, the Kiselev term
	reproduces a cosmological-constant term and a constant term, respectively, and
	so does not introduce a genuinely new radial dependence.
	For \(\omega=-2/3\), we introduce
	\begin{equation}
		\begin{aligned}
			x&=2MU, &
			s&=\frac{Q^2}{4M^2}, &
			a&=2M\gamma, &
			\ell&=\frac{4M^2\Lambda}{3},\\
			t&=\frac{4M^2E^2}{L^2}+\ell
			=\frac{4M^2}{b^2}+\ell && (E\ne0).
		\end{aligned}
		\label{eq:dimensionless}
	\end{equation}
	Equation~\eqref{eq:orbit-general} becomes
	\begin{equation}
		\left(\frac{\dd x}{\dd\phi}\right)^2
		=q_{s,t,a}(x),
		\qquad
		q_{s,t,a}(x)=t+a x-x^2+x^3-sx^4.
		\label{eq:photon-polynomial-general}
	\end{equation}
	When its discriminant is nonzero, the compactification of
	\begin{equation}
		C_{s,t,a}:\qquad y^2=q_{s,t,a}(x)
		\label{eq:photon-curve_quintessence}
	\end{equation}
	is a genus-one curve with holomorphic differential
	\(\omega_C=\dd x/y\).  For \(a=0\) we write
	\begin{equation}
		C_{s,t}:\qquad y^2=q_{s,t}(x):=t-x^2+x^3-sx^4
		\label{eq:photon-curve}
	\end{equation}
	and first analyze this normalized family.  The linear Kiselev term can
	be removed:
	
	\begin{prop}
		\label{prop:linear-term-removal}
		Put
		\begin{equation}
			D=3a^2+8t,
			\qquad
			N=a^3+4at+8t^2,
			\qquad
			H=3a^4+16a^2t+64at^2+256st^3.
			\label{eq:DNH}
		\end{equation}
		On the open set $tDN\ne0$, $C_{s,t,a}$ is birational to
		\begin{equation}
			C_{\widetilde{s},\widetilde{t}}:\qquad
			\widetilde{y}^2=q_{\widetilde{s},\widetilde{t}}(\widetilde{x})
			=
			\widetilde{t}-\widetilde{x}^2+\widetilde{x}^3
			-\widetilde{s}\,\widetilde{x}^4,
			\label{eq:quartic_curve2}
		\end{equation}
		where
		\begin{equation}
			\widetilde t=\frac{8N^2}{D^3},
			\qquad
			\widetilde s=\frac{DH}{32N^2}.
			\label{eq:tilde-parameters}
		\end{equation}
		The holomorphic differentials satisfy
		\begin{equation}
			\frac{\dd x}{y}
			=\sqrt{\frac{8t}{3a^2+8t}}\,
			\frac{\dd\widetilde x}{\widetilde y},
			\label{eq:tilde-differential}
		\end{equation}
		with a consistent choice of square root. In particular,
		$(\widetilde s,\widetilde t)=(s,t)$ when $a=0$. 
	\end{prop}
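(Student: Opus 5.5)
The plan is to realize the birational map by an explicit fractional-linear substitution in $x$ that fixes $x=0$, followed by a rescaling of $x$ and of $y$. Since $q_{s,t,a}(0)=t\neq0$, the point $x=0$ is not a branch point. A substitution $x=X/(1+\beta X)$ keeps the constant term $t$ and can be chosen to kill the linear term. Concretely, I will set
\[
Y=(1+\beta X)^2\,y,
\qquad
P(X)=(1+\beta X)^4\,q_{s,t,a}\!\left(\frac{X}{1+\beta X}\right),
\]
so that $Y^2=P(X)$. The polynomial $P(X)$ is again a quartic, with $P(0)=t$ and linear coefficient $4t\beta+a$. Choosing $\beta=-a/(4t)$, which is legitimate because $t\ne0$, removes the linear term.

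Next I compute the remaining coefficients of $P$. A direct expansion gives the following:
\begin{itemize}
\item the $X^2$ coefficient is $6t\beta^2+3a\beta-1=-D/(8t)$;
\item the $X^3$ coefficient is $4t\beta^3+3a\beta^2-2\beta+1=N/(8t^2)$;
\item the $X^4$ coefficient is $c_4=t\beta^4+a\beta^3-\beta^2+\beta-s$, a polynomial in $a$, $s$ and $1/t$.
\end{itemize}
I then normalize by $X=\mu\widetilde x$ and $Y=\widetilde y/\lambda$, so that $\widetilde y^2=\lambda^2P(\mu\widetilde x)$. For this to equal $q_{\widetilde s,\widetilde t}(\widetilde x)$, the conditions are
\[
\lambda^2\mu^2\frac{D}{8t}=1,
\qquad
\lambda^2\mu^3\frac{N}{8t^2}=1 .
\]
Taking their ratio gives $\mu=Dt/N$, which requires $N\neq0$. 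Substituting back gives $\lambda^2=8N^2/(D^3t)$, which requires $D\neq0$. The constant term then becomes $\widetilde t=\lambda^2t=8N^2/D^3$, as claimed.

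The quartic coefficient gives $\widetilde s=-\lambda^2\mu^4c_4=-8tDc_4/N^2$. I expect the main obstacle to be the bookkeeping that shows this equals $DH/(32N^2)$, that is, $c_4=-H/(256t^3)$. To verify it, I would multiply $c_4$ by $256t^3$, substitute $\beta=-a/(4t)$, and collect the terms:
\[
256t^3\bigl(t\beta^4+a\beta^3-\beta^2+\beta\bigr)
= a^4-4a^4-16a^2t-64at^2
= -3a^4-16a^2t-64at^2 .
\]
The remaining term is $-256st^3$. Together these give exactly $-H$, which is the required identity.

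For the differential, I use $dx=dX/(1+\beta X)^2$ and $y=Y/(1+\beta X)^2$, which together give $dx/y=dX/Y=\lambda\mu\,d\widetilde x/\widetilde y$. It remains to compute $\lambda\mu$:
\[
(\lambda\mu)^2=\frac{8N^2}{D^3t}\cdot\frac{D^2t^2}{N^2}=\frac{8t}{D}.
\]
This gives \eqref{eq:tilde-differential}, where the square-root sign is fixed by the choice of $\lambda$. The map is birational on $tDN\neq0$ because each step is invertible there: $\beta$ is defined since $t\ne0$, $\mu\ne0$ since $tD\ne0$, and $\lambda$ is finite and nonzero since $tDN\ne0$.

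Finally, when $a=0$ we have $\beta=0$, $D=8t$, $N=8t^2$ and $H=256st^3$. Substituting gives $\widetilde t=8\cdot64t^4/(512t^3)=t$ and $\widetilde s=8t\cdot256st^3/(32\cdot64t^4)=s$, so the construction reduces to the identity map, as stated.
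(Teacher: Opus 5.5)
Your proof is correct and is essentially the paper's argument: your composite map $x=\mu\widetilde x/(1+\beta\mu\widetilde x)$, $y=\widetilde y/\bigl(\lambda(1+\beta\mu\widetilde x)^2\bigr)$ is exactly the paper's single fractional-linear substitution with $A_0=\mu=tD/N$, $B_0=\beta\mu=-aD/(4N)$ and $C_0=1/\lambda$, which you have split into a linear-term-killing step and a rescaling step. There is one small slip: $\lambda^2\mu^4=8t^3D/N^2$, so $\widetilde s=-8t^3Dc_4/N^2$ rather than $-8tDc_4/N^2$, and this corrected expression is what makes $c_4=-H/(256t^3)$ the right identity to check; your subsequent verification of that identity is correct.
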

	
	\begin{proof}
		On the open set $tDN\ne0$, define
		\begin{equation}
			A_0=\frac{tD}{N},
			\qquad
			B_0=-\frac{aD}{4N},
			\qquad
			C_0^2=\frac{tD^3}{8N^2},
			\label{eq:linear-removal-coefficients}
		\end{equation}
		and make the fractional-linear change of variables
		\begin{equation}
			x=\frac{A_0\widetilde x}{1+B_0\widetilde x},
			\qquad
			y=\frac{C_0\widetilde y}{(1+B_0\widetilde x)^2}.
			\label{eq:linear-removal-map}
		\end{equation}
		Multiplying $q_{s,t,a}(A_0\widetilde x/(1+B_0\widetilde x))$
		by $(1+B_0\widetilde x)^4/C_0^2$, the linear coefficient vanishes,
		the quadratic and cubic coefficients become $-1$ and $1$, and the
		constant and quartic coefficients are respectively
		$\widetilde t$ and $-\widetilde s$.  Differentiating
		\eqref{eq:linear-removal-map} gives
		\begin{equation}
			\frac{\dd x}{y}
			=\frac{A_0}{C_0}\frac{\dd\widetilde x}{\widetilde y}
			=\sqrt{\frac{8t}{3a^2+8t}}\,
			\frac{\dd\widetilde x}{\widetilde y}.
			\label{eq:tilde-differential_pf}
		\end{equation}
	\end{proof}
	We analyze the fixed-$s$ family first, and return to the pullback along~\eqref{eq:tilde-parameters} in
	Section~\ref{sec:Kiselev_horizons_AD}.
	\par Let $x_0$ be the turning root relevant to an exterior scattering orbit, and let $x_R=2M/r_R$ and $x_S=2M/r_S$.  The coordinate angular separation is the relative period
	\begin{equation}
		\phi_{RS}(s,t)=
		\int_{x_R}^{x_0}\frac{\dd x}{\sqrt{q_{s,t}(x)}}
		+\int_{x_S}^{x_0}\frac{\dd x}{\sqrt{q_{s,t}(x)}}.
		\label{eq:coordinate-angle}
	\end{equation}
	With the orientation convention used here, the finite-distance deflection angle is~\cite{IshiharaEtAl2016}
	\begin{equation}
		\alpha=\phi_{RS}+\Psi_R+\Psi_S-\pi,
		\qquad
		\sin\Psi_i=b\,U_i\sqrt{f(r_i)},
		\quad i=R,S.
		\label{eq:finite-distance-angle}
	\end{equation}
	In an asymptotically flat geometry, $x_R,x_S\to0$ and
	\begin{equation}
		\alpha=2\Delta\phi-\pi,
		\qquad
		\Delta\phi(s,t)=\int_0^{x_0}\frac{\dd x}{\sqrt{q_{s,t}(x)}}.
		\label{eq:asymptotic-angle}
	\end{equation}
	For $\Lambda\ne0$, the elliptic integral depends on $b$ and $\Lambda$ only through the combination $t$, but the endpoint angles in
	\eqref{eq:finite-distance-angle} retain their separate dependence. Thus the full finite-distance observable is not, in general, a function of $(s,t)$ alone.
	%
	\section{The family of rational elliptic surfaces}
	\label{sec:surface}
	
	Let $\hat{\pi} : \mathcal{C} \to B$ denote the family of genus-one curves obtained from \eqref{eq:photon-curve}, where $B=\mathbb P^1_{(t)}\times\mathcal M_{(s)}$ and $C_{s,t} = \hat{\pi}^{-1}(s,t)$. The family $\pi : \mathcal{E} \to B$ of Jacobian elliptic curves of the fibers of $\mathcal{C}$ is birationally equivalent
	to the Weierstrass model
	\begin{equation}
		E_{s,t}:\qquad \eta^2=4\xi^3-g_2(s,t)\,\xi-g_3(s,t),
		\qquad
		g_2=\frac43-16st,\quad g_3=\frac{32}{3}st+\frac{8}{27}-4t,
		\label{eq:weier}
	\end{equation}
	with holomorphic one-form $\dd\xi/\eta=\dd x/(2y)$. The birational map, valid for $t>0$ with $\tau=\sqrt t$, is
	\begin{equation}
		\xi^{(\tau)}=-\frac{2\tau y}{x^2}-\frac13+\frac{2t}{x^2},
		\qquad
		\eta^{(\tau)}=-\frac{8t}{x^3}y+2\tau-\frac{4\tau}{x}+\frac{8\tau t}{x^3},
		\label{eq:birational}
	\end{equation}
	(An analogous map defined over $\mathbb Q(i\sqrt s)$ exists for $s>0$.) For each fixed \(s\), the minimal compactification
	of~\eqref{eq:weier} over $\mathbb P^1_{(t)}$ is a rational elliptic
	surface~\cite{Miranda1989}. Allowing $s$ to vary gives a two-parameter
	elliptic fibration
	\begin{equation}
		\pi:\mathcal E\longrightarrow B=\mathbb P^1_{(t)}\times\mathcal M_{(s)}
		\label{eq:family}
	\end{equation}
	which may be thought of as a one-parameter family of rational elliptic surfaces, parameterized by the deformation parameter $s$. For $s$ fixed and generic, the elliptic fibration $\mathcal{E}$ is called the relative Jacobian fibration of the genus-one fibration $\mathcal{C}$. The two are birational as algebraic varieties, and the birational map \eqref{eq:birational}, restricted to each smooth fiber, is an isomorphism known as the Abel--Jacobi map. Working with $\mathcal{E}$ instead of $\mathcal{C}$ gives access to more of the relevant geometric structure, in particular the singular fibers, and allows the direct computation of the Picard--Fuchs operators, which is carried out in Section~\ref{sec:GMPF}.
	\par First, we analyze the structure of the singular fibers. Fix $s$. Then the singular fibers of the fibration $\pi : \mathcal{E} \to \mathbb{P}^1_{(t)}$ lie over the roots of the discriminant 
	\begin{equation}
		\label{eq:discr}
		\Delta(s,t) := g_2(s,t)^3 - 27g_3(s,t)^2 \, .
	\end{equation}
	Since \eqref{eq:birational} is the identity map on the base $B$, these are also the values of $t$ at which the genus-one curve \eqref{eq:photon-curve} becomes singular. At $t=0$, we have $g_2=4/3$ and $g_3=8/27$ in \eqref{eq:weier}, and
	\begin{equation}
		4\xi^3-\tfrac43\xi-\tfrac8{27}=\tfrac{4}{27}(3\xi-2)(3\xi+1)^2,
		\label{eq:nodalfiber}
	\end{equation}
	a nodal cubic.  For generic \(s\), the corresponding fiber of the
	elliptic surface is of type \(I_1\); at \(s=1/4\), the discriminant has a
	double zero and the fiber enhances to \(I_2\), as discussed in
	Section~\ref{sec:enhancements}.  Writing $s=(9-S^2)/32$, the remaining
	factor of $\Delta(s,t)$ splits, giving two further $I_1$ fibers at
	\begin{equation}
		t_c^{(\pm)}=\frac{8(S\pm1)}{(S\pm3)^3},
		\label{eq:tcpm}
	\end{equation}
	which correspond, on the quartic model, to the collision of two real branch
	points at $x_c^{(\pm)}=\pm4/(S\pm3)$. Thus a generic member of the family has
	exactly three finite $I_1$ fibers: the generic weak-field node at $t=0$, and
	the two photon-sphere nodes at $t_c^{(\pm)}$.
	\par Equation \eqref{eq:weier} is (a normalization of) a family of rational elliptic surfaces constructed by Herfurtner in his classification of elliptic surfaces with four singular fibers~\cite{Herfurtner1991}, where it appears with Kodaira type $III^*+3I_1$ and Mordell--Weil group generated by a single infinite-order section; the coordinate change
	\begin{equation}
		\widehat\xi=\frac{\widehat t^{\,2}\xi}{\rho},\quad
		\widehat\eta=\frac{\widehat t^{\,3}\eta}{\rho^{3/2}},\quad
		\widehat t=\frac{2}{9t(8s-3)},\quad
		\sigma=-\frac83\frac{s}{8s-3},\quad
		\rho=\frac23,
		\label{eq:tildecoords}
	\end{equation}
	converts it into the normal form
	\[
	\widehat\eta^{\,2}
	=4\widehat\xi^{\,3}
	-3\widehat t^{\,3}(\sigma+\widehat t)\widehat\xi
	-\widehat t^{\,5}(1+\widehat t),
	\]
	which was used in \cite{Doran2001}.  We use hats here to distinguish this modular
	normalization from the Kiselev variables
	\((\widetilde s,\widetilde t)\) in
	Proposition~\ref{prop:linear-term-removal}.  As \(\sigma\) varies, the
	family interpolates between the extremal modular surface for
	\(\Gamma_0(1)^*\) at \(s=0\), with configuration
	\(II^*+2I_1\), and the surface for \(\Gamma_0(2)\) at \(s=1/4\),
	with configuration \(III^*+I_2+I_1\).  The latter is precisely where
	the Mordell--Weil group acquires \(2\)-torsion.  The additional section
	of~\eqref{eq:weier},
	\[
	\xi=\frac23-\frac1{4s},
	\qquad
	\eta=\frac{i(4s-1)}{4s^{3/2}},
	\]
	is generically of infinite order and becomes \(2\)-torsion at
	\(s=1/4\) (and in the \(s\to\infty\) limit).
	\par Minahan, Nemeschansky and Warner give the Seiberg--Witten curve of the
	non-critical \(E_7\) string with two Wilson-line masses \(m_1,m_2\)
	as~\cite{MinahanNemeschanskyWarner1997}\footnote{Note that the variables $x,y$ here are unrelated to those in Section~\ref{sec:photon-curve}.}
	\begin{equation}
		y^2=x^3+\frac{u^2}4x^2-2u\Bigl(4\sin^2\bigl(\tfrac{m_+}2\bigr)x+u^2\Bigr)
		\Bigl(4\sin^2\bigl(\tfrac{m_-}2\bigr)x+u^2\Bigr),
		\qquad m_\pm=m_1\pm m_2.
		\label{eq:E7curve}
	\end{equation}
	Setting $m_-=0$ (equal masses) and $m_+=2m$, and applying the substitution
	\begin{equation}
		\xi=-\frac{4x}{u^2}-\frac13,\qquad \eta=\frac{16iy}{u^3},\qquad
		t=\frac{128\cos^2m}{u},\qquad s=-\frac14\tan^2m
		\label{eq:E7dict}
	\end{equation}
	to our Weierstrass model~\eqref{eq:weier} reproduces~\eqref{eq:E7curve} exactly at $m_-=0$:
	\begin{equation}
		y^2=x^3+\frac{u^2}4x^2-8\sin^2(m)\,u^3x-2u^5.
		\label{eq:ec3}
	\end{equation}
	Thus the Jacobian of \(C_{s,t}\), and hence the photon curve up to the
	birational identification above, is also the equal-mass non-critical \(E_7\)
	Seiberg--Witten curve.  Here \(u\) is the Coulomb-branch coordinate and
	\(m\) is the remaining Wilson-line mass.  Restricting further to
	\(s=0\) (\(m=0\)) gives the non-critical \(E_8\) curve.
	%
	\section{Gauss--Manin connection and Picard--Fuchs equation}
	\label{sec:GMPF}
	For fixed $s$, the periods $\Pi(t;s)=\int_{\gamma_t}d\xi/\eta$ of the rational elliptic surface~\eqref{eq:weier} satisfy a second-order linear ODE in $t$,
	\begin{equation}
		\mathcal L_t(s)\,\Pi(t;s):=\Bigl(\partial_t^2+P(t,s)\partial_t+Q(t,s)\Bigr)\Pi(t;s)=0.
		\label{eq:PFhom}
	\end{equation}
	Applying Stiller's reduction in \cite[\S 3]{Stiller1981} to the Weierstrass coefficients
	\eqref{eq:weier} gives
	\begin{equation}
		D_0=16s^2t-6st+4s-1,
		\qquad
		D_1=256s^3t^2+128s^2t-144st+16s+27t-4,
		\label{eq:PF-denominators}
	\end{equation}
	and then the coefficients
	\begin{equation}
		P(t,s)=\frac{2\mathcal N_P(t,s)}{tD_0D_1}, \quad
		Q(t,s)=\frac{3\mathcal N_Q(t,s)}{4tD_0D_1}.
		\label{eq:PQterm}
	\end{equation}
	Stiller's method is briefly reviewed in Appendix~\ref{ss-stiller}. The explicit polynomial numerators \(\mathcal N_P,\mathcal N_Q\) are
	listed in Appendix~\ref{ss:coeffs}.
	Equation~\eqref{eq:PFhom} can be solved in closed form by
	\begin{equation}
		\Pi(t;s)=g_3(t,s)^{-1/6}F(z),
		\qquad
		z=\frac{1}{27}\frac{g_2(t,s)^3}{g_3(t,s)^2},
		\qquad
		F(z)={}_2F_1\!\left(\tfrac1{12},\tfrac7{12};\tfrac23;z\right),
		\label{eq:hypergeom}
	\end{equation}
	using Gauss's hypergeometric function; a basis of solutions can then be constructed by standard methods.
	\par Allowing $s$ to vary, the periods no longer close under an ODE in $s$ alone; instead differentiating~\eqref{eq:hypergeom} in $t$ and $s$ and eliminating $F'(z)$ gives a first-order equation for $\partial_s\Pi$ in terms of $\Pi$ and $\partial_t\Pi$,
	\begin{equation}
		\widetilde{\mathcal L}_{t}(s)\,\Pi(t,s):=\Bigl(\partial_s+A(t,s)\partial_t+B(t,s)\Bigr)\Pi=0,
		\label{eq:GMeqn}
	\end{equation}
	\begin{equation}
		A(t,s)=-\frac{2t(8st-9t+2)}{16s^2t-6st+4s-1},
		\qquad
		B(t,s)=\frac{3t}{16s^2t-6st+4s-1}.
		\label{eq:ABcoeffs}
	\end{equation}
	Equations~\eqref{eq:PFhom} and~\eqref{eq:GMeqn} together give the so-called
	Gauss--Manin connection \(\nabla_{\mathrm{GM}}\) of the Jacobian elliptic curve family
	in the period basis \(\{\Pi,\partial_t\Pi\}\). More specifics will be given in Section \ref{sec:normalfunction}. Motion in \(t\) is the
	ordinary Picard--Fuchs variation at fixed \(s\), whereas motion in \(s\)
	deforms the elliptic surface; equation~\eqref{eq:GMeqn} expresses
	\(\partial_s\Pi\) in the same two-dimensional period basis.
	
	After \(s=Q^2/(4M^2)\) and \(t=4M^2/b^2\) (that is, \(\Lambda=0\)) are imposed,
	\eqref{eq:PQterm} agrees with Sasaki's charged deformation, and
	\eqref{eq:ABcoeffs} reproduces his isomonodromic equation
	\cite{Sasaki2024}.  Here both follow directly from the Weierstrass data
	\((g_2,g_3)\), independently of the deflection-angle calculation.
	\begin{prop}[Isomonodromy]
		\label{prop:isomonodromy}
		Away from collision values of \(s\), the deformation of
		\(\mathcal L_t(s)\) defined by~\eqref{eq:GMeqn} is isomonodromic.
		In particular, the monodromy representation of~\eqref{eq:PFhom} around
		the singular points \(t=0,t_c^{(\pm)}\) (of which \(t_c^{(\pm)}\) move with \(s\))
		is independent of \(s\), up to simultaneous conjugation.
	\end{prop}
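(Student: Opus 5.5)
The plan is to show that the pair \eqref{eq:PFhom}, \eqref{eq:GMeqn} is a compatible (integrable) Pfaffian system in the two variables $(t,s)$. Isomonodromy then follows by the standard argument. Write the period vector $\Phi=(\Pi,\partial_t\Pi)^T$. Equation \eqref{eq:PFhom} gives $\partial_t\Phi=\mathcal A_t\Phi$ with $\mathcal A_t=\begin{pmatrix}0&1\\-Q&-P\end{pmatrix}$. Equation \eqref{eq:GMeqn}, together with its $t$-derivative (rewritten using \eqref{eq:PFhom} to remove $\partial_t^2\Pi$), gives $\partial_s\Phi=\mathcal A_s\Phi$ with
\[
\mathcal A_s=\begin{pmatrix}-B&-A\\ -\partial_tB+AQ & -B-\partial_tA+AP\end{pmatrix}.
\]
The connection $\nabla=\dd-\mathcal A_t\,\dd t-\mathcal A_s\,\dd s$ is flat if and only if $\partial_s\mathcal A_t-\partial_t\mathcal A_s+[\mathcal A_t,\mathcal A_s]=0$.

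The cleanest route to flatness avoids checking this identity on the explicit numerators $\mathcal N_P,\mathcal N_Q$. Instead, I would argue geometrically. Both operators annihilate every local period $\Pi(t,s)=\int_{\gamma}\dd\xi/\eta$, where $\gamma$ is a locally constant cycle in $H_1$ of the fibers of \eqref{eq:family}. The reason is that \eqref{eq:GMeqn} was derived by differentiating the closed form \eqref{eq:hypergeom}, and every branch of that expression is a period. Away from the discriminant, the periods of two independent cycles are linearly independent functions. Their vectors $\Phi_1,\Phi_2$ therefore form a fundamental matrix that solves both equations at once. Consequently $\nabla$ is the Gauss--Manin connection on $R^1\pi_*\mathbb C$, which is flat by construction. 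As a sanity check, one can verify the curvature identity symbolically; it is a finite rational-function computation.

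Isomonodromy then follows. Let $U\subset B$ be the complement of the discriminant locus $\{tD_0D_1=0\}$ and of the collision values of $s$, namely those $s$ at which $t_c^{(+)}$, $t_c^{(-)}$ and $0$ collide or pass through $\infty$, such as $s=1/4$ and $s=9/32$. Over $U$ the singular points $0,t_c^{(\pm)}(s)$ vary holomorphically and stay distinct. A flat connection on $U$ gives a monodromy representation of $\pi_1(U)$. For $s,s'$ in a connected component, a path in the $s$-direction identifies the fiber loops $\pi_1(\mathbb P^1_{(t)}\setminus\{0,t_c^{(\pm)}(s),\infty\})$ with those at $s'$. Flatness makes parallel transport along this path intertwine the two monodromy representations. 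Hence the representations at $s$ and $s'$ are conjugate by a single matrix, namely the transport of the base point frame. Equivalently, the fundamental solution $\Phi(t,s)$ of the joint system has $t$-monodromy matrices $M_j$ with $\partial_sM_j=0$. The monodromy is in fact already a subgroup of $SL_2(\mathbb Z)$ acting on $H_1(E_{s,t},\mathbb Z)$, so it is discrete and must be locally constant.

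The main obstacle is the point about poles. The $s$-equation \eqref{eq:GMeqn} has poles along $D_0=0$, and these must not introduce extra singular points that move in $t$ without being genuine fiber singularities. The curve $D_0=0$ is $t=(1-4s)/(2s(8s-3))$, which is not a component of $\Delta$. I would therefore check that $D_0$ is an apparent singularity of $\mathcal L_t(s)$: the local exponents are integers and there is no logarithm. Equivalently, the periods are holomorphic there, since the fiber is smooth. Such points carry trivial monodromy and do not affect the statement. The same check is needed at $t=\infty$, where the $III^*$ fiber sits. Its local monodromy is conjugate to the fixed Kodaira matrix of order $4$ for all noncollision $s$, which is consistent with the claim.
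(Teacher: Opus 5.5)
Your proposal is correct and takes essentially the same route as the paper: the Gauss--Manin connection is flat on the complement of the discriminant, and parallel transport in $s$ then conjugates the $t$-monodromy representations; your explicit Pfaffian form, the integrality argument and the treatment of $t=\infty$ are additions consistent with that route. The only slip is calling $\{tD_0D_1=0\}$ the discriminant locus: in fact $\Delta=-16\,tD_1$, and $D_0=0$ is exactly where Stiller's $\delta=3g_3\partial_tg_2-2g_2\partial_tg_3$ vanishes, so the frame $\{\Pi,\partial_t\Pi\}$ degenerates there and the singularity is only apparent, as you yourself argue later.
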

	\begin{proof}
		The Gauss--Manin connection $\nabla_{\mathrm{GM}}$ is well-known to be flat on the complement of the
		discriminant locus.  Parallel transport in the \(s\)-direction identifies the
		homology local systems of nearby \(t\)-fibers and transports the punctured
		\(t\)-line and its loops.  Their monodromy representations are therefore
		conjugate.  In the explicit coordinates above, the flatness condition is
		the algebraically solvable Painlev\'e~VI system found in
		\cite{Sasaki2024}; equivalently, \(s\) is the isomonodromic time in
		Doran's description~\cite{Doran2001}.
	\end{proof}
	%
	\section{The deflection angle as lifted normal function}
	\label{sec:normalfunction}
	We now make precise the sense in which the finite-distance angular separation $\phi_{RS}(t,s)$  is a period-theoretic object. It is a so-called normal function, a notion developed by Griffiths~\cite{MR527828}.
	\par Over the complement $B^\circ=B\setminus\{\Delta=0\}$ of the
	discriminant locus in
	$B=\mathbb P^1_{(t)}\times\mathcal M_{(s)}$, the smooth fibers
	of~\eqref{eq:family} define the Gauss--Manin bundle
	\begin{equation}
		\mathbb H^1=R^1\pi_*\underline{\mathbb C}\otimes\mathcal O_{B^\circ},
		\qquad
		\nabla_{\mathrm{GM}}:\mathbb H^1\to\mathbb H^1\otimes\Omega^1_{B^\circ},
		\label{eq:GMsystem}
	\end{equation}
	with its canonically flat connection. Here $\underline{\mathbb C}$ is the constant sheaf on $\mathcal{E}^\circ:=\pi^{-1}(B^\circ)$, and we also write $\mathcal{C}^\circ:=\hat\pi^{-1}(B^\circ)$. For a locally flat basis of cycles
	$\Gamma_i \in H_1(C_{t,s},\mathbb Z)$, $i=0,1$, and a choice of nonvanishing holomorphic 1-form $\omega_{t,s} \in H^{1,0}(C_{t,s})$, the closed periods
	$\Pi_{\Gamma_i}(t,s)=\int_{\Gamma_i} \omega_{t,s}$ solve the homogeneous system
	attached to $\nabla_{\mathrm{GM}}$; see Section~\ref{sec:GMPF}.\footnote{The Gauss--Manin connection $\nabla_{\mathrm{GM}}$ associated to multiparameter families of elliptic curves  determines a differential geometric structure on $B^\circ$ known as a holomorphic projective structure. See Doran \cite[Sec. 2.2]{MR1779161}, Yoshida \cite[Ch. 8]{MR986252}.}
	\par Fix a generic $s$. On each smooth fiber $C_{t,s}$, we mark the points $P_T$ (the turning point, i.e.\ the branch point over the smallest positive root of $q_{s,t}$), $P_R$, and $P_S$ (receiver and source, lying over the finite values $x_R,x_S$). The relative one-chains $\Gamma_{R,T},\Gamma_{S,T}$ with $\partial\Gamma_{R,T}=P_T-P_R$ and $\partial\Gamma_{S,T}=P_T-P_S$ give
	\begin{equation}
		\Gamma_{RS}=\Gamma_{R,T}+\Gamma_{S,T},
		\qquad
		\partial\Gamma_{RS}=2P_T-P_R-P_S=:D_{\mathrm{lens}},
		\label{eq:lensdivisor}
	\end{equation}
	a divisor of degree zero on $C_{t,s}$. By definition,
	\begin{equation}
		\phi_{RS}(t,s)=\int_{\Gamma_{RS}}\omega_{t,s}.
	\end{equation}
	Since $D_{\mathrm{lens}}$ has degree zero, its Abel--Jacobi image
	\begin{equation}
		\nu_{\mathrm{lens}}(t,s)=\operatorname{AJ}_{C_{t,s}}(D_{\mathrm{lens}})=  \left[\int_{\Gamma_{RS}}\omega_{t,s}\right]\in\operatorname{Jac}(C_{t,s}) \cong E_{t,s}
		\label{eq:normalfunc}
	\end{equation}
	is well defined modulo the period lattice $\Lambda_{t,s} := \mathrm{span}_{\mathbb{Z}}\{\Pi_{\Gamma_0}(t,s),\Pi_{\Gamma_1}(t,s)\} \subset \mathbb{C}$ and defines a
	\emph{normal function}
	\begin{equation}
		\nu_{\mathrm{lens}}:B^\circ\longrightarrow J(\mathcal C^\circ/B^\circ) \cong \mathcal{E}^\circ
	\end{equation}
	valued in the relative Jacobian fibration, which is represented algebraically by the Weierstrass model \eqref{eq:weier} obtained from $\mathcal{C}$. The multivalued function
	$\phi_{RS}(t,s)$ is precisely a local lift of $\nu_{\mathrm{lens}}$ to the
	universal cover of $B^\circ$.  Thus it is a relative period even though it
	is neither single-valued nor the period of a closed cycle.
	\par The physical deflection angle $\alpha(t,s)$ differs from $\phi_{RS}$
	only by boundary terms.  Define
	\(\psi(P_i)=\arcsin\!\bigl(bU_i\sqrt{f(r_i)}\bigr)\) for
	\(i=R,S\), and \(\psi(P_T)=\pi/2\).  In relative de Rham cohomology,
	\(\alpha\) is the pairing of
	\(\widehat\omega_{t,s}=(\omega_{t,s},\psi_{t,s})\in
	H^1_{\mathrm{dR}}(C_{t,s},\{P_T,P_R,P_S\})\) with
	\(\Gamma_{RS}\):
	\begin{equation}
		\alpha(t,s)=\int_{\Gamma_{RS}}\omega_{t,s}-\psi_{t,s}(\partial\Gamma_{RS})
		=\phi_{RS}(t,s)+\psi(P_R)+\psi(P_S)-\pi.
		\label{eq:alpharelative}
	\end{equation}
	Both $\phi_{RS}$ and $\alpha$ are therefore relative periods attached to
	the same degree-zero divisor $D_{\mathrm{lens}}$.  They differ by an
	$H^0$-valued endpoint term.  This term does contribute to the
	Gauss--Manin derivative, but only through the
	elementary functions in~\eqref{eq:finite-distance-angle}; the genuinely
	elliptic inhomogeneity is the one derived in
	Section~\ref{ssec:inhomogeneous-deflection}.
	\par The role of the Weierstrass model~\eqref{eq:weier} is to make the Abel--Jacobi map explicit: uniformizing $\xi=\wp(w),\eta=\wp'(w)$ by the Weierstrass $\wp$-function of the period lattice $\Lambda_{t,s}$, with $w \in \mathbb{C}$, turns the normal function into a (multivalued) sum of $\wp^{-1}$'s,
	\begin{equation}
		\phi_{RS}=2\left(2w_T^{(\tau)}-w_R^{(\tau)}-w_S^{(\tau)}\right)
		=2\left(2\wp^{-1}(\xi^{(\tau)}_T)-\wp^{-1}(\xi^{(\tau)}_R)-\wp^{-1}(\xi^{(\tau)}_S)\right)
		\quad(\mathrm{mod\ periods}),
		\label{eq:phiRSwp}
	\end{equation}
	which we use in Section~\ref{sec:expansions} to compute the weak- and strong-field expansions directly from the local behavior of $\wp^{-1}$ near the degenerating fibers.
	%
	\section{Seiberg--Witten differential and source term}
	\label{sec:SW}
	
	Equations~\eqref{eq:PFhom} and~\eqref{eq:GMeqn} govern closed
	periods.  The deflection angle is instead a relative period and therefore
	acquires endpoint sources.  We now show that these sources are not external
	to the elliptic-surface geometry: after a suitable gauge choice, they are
	obtained by applying the same Picard--Fuchs operator to a
	Seiberg--Witten primitive of the natural two-form.
	
	\subsection{Meromorphic differentials on the quartic}
	\label{ssec:meromorphic-differentials}
	
	Let $s\ne0$ and write a smooth genus-one curve $C_{s,t}$ \eqref{eq:photon-curve} as
	$y^2=f(x)=-s\prod_{j=1}^4(x-e_j)$.  For
	$\eta_i=\dd x/((x-e_i)y)$, direct differentiation gives
	\begin{equation}
		\frac{\dd x}{y}
		=-\frac12\sum_{i=1}^4e_i\eta_i
		-\dd\!\left(\frac{x}{y}\right).
		\label{eq:decomposition}
	\end{equation}
	In the coordinate $z=\sqrt{x-e_i}$ at $P_i=(e_i,0)$,
	\begin{equation}
		\eta_i
		=\frac{2}{\sqrt{f'(e_i)}}\frac{\dd z}{z^2}
		+O(1)\,\dd z.
		\label{eq:etalocal}
	\end{equation}
	Thus each $\eta_i$ is of the second kind.  The quartic compactification has
	two points
	\[
	C_\pm:\quad x=\infty,\qquad y/x^2=\pm\sqrt{-s}.
	\]
	Using $\zeta=1/x$, one finds at $\zeta =0$ that
	\begin{equation}
		\operatorname{Res}_{C_\pm}\frac{x\,\dd x}{y}
		=\mp\frac1{\sqrt{-s}},
		\qquad
		\operatorname{Res}_{C_\pm}\frac{x^2\,\dd x}{y}
		=\pm\frac1{2(-s)^{3/2}}.
		\label{eq:meromorphic-basis-residues}
	\end{equation}
	Consequently $\dd x/y$, $x\,\dd x/y$, and $x^2\,\dd x/y$ form a
	convenient meromorphic reduction basis on the curve punctured at
	$C_\pm$, but they are not all differentials of the second kind.  This
	distinction is essential when the Seiberg--Witten residues are compared
	below.
	
	\subsection{The Seiberg--Witten differential}
	\label{ssec:SWdiff}
	
	Seiberg--Witten integrable systems were studied in general by Shimizu~\cite[Def.~3.1.4]{Shimizu2006}, building on earlier work of Donagi~\cite{MR1726927}. They are specified by the choice of a rational elliptic surface $\mathcal{E} \to \mathbb{P}^1_{(t)}$ together with a holomorphically varying rational 1-form $\lambda$ on each smooth fiber $E_t$, such that $\partial_t\lambda = \omega \in H^{1,0}(E_t)$ is the nonvanishing holomorphic 1-form and $\dd \lambda = \Omega$ is a holomorphic 2-form on the complement of the singular fibers in the total space. The former identity is the defining Gauss--Manin property: differentiating a
	period of $\lambda$ in the base direction gives a period of the relative
	holomorphic differential.  These equations are collectively known as the Seiberg--Witten equations. 
	\par In this situation, we work with the genus-one fibration $\mathcal{C}$ rather than the relative Jacobian fibration $\mathcal{E}$. This is permissible because the generic fiber of $\mathcal{C}$ has a rational point over the function field of $\mathbb{P}^1_{(t)}$, for instance the points $C_\pm$ at infinity. At fixed $s$, put
	\begin{equation}
		\Omega_C=\frac{\dd t\wedge\dd x}{y} \in \Omega^2(\mathcal{C}^\circ) \, ,
		\label{eq:quartic-two-form}
	\end{equation}
	where $\mathcal{C}^\circ=\hat\pi^{-1}(B^\circ)$ is the smooth locus of the fibration \eqref{eq:family} at the fixed value of $s$. The quartic admits the algebraic Seiberg--Witten primitive
	\begin{equation}
		\lambda_C=2y\,\dd x,
		\qquad
		\dd\lambda_C
		=2\,\dd y\wedge\dd x
		=\frac{\dd t\wedge\dd x}{y}
		=\Omega_C,
		\qquad
		\partial_t\lambda_C=\frac{\dd x}{y} \, ,
		\label{eq:SWgauge}
	\end{equation}
	which are the Seiberg--Witten equations for $\mathcal{C}$ at the fixed value of $s$.  A direct expansion at the two horizontal
	sections $C_\pm$ gives
	\begin{equation}
		\operatorname{Res}_{C_\pm}(\lambda_C)
		=\pm\frac{\sqrt{-s}\,(1-4s)}{8s^3}.
		\label{eq:SWresidues}
	\end{equation}
	The poles therefore lie along marked sections of the elliptic surface, not
	along a finite collection of singular fibers.  Their residues are independent
	of $t$, as required at fixed mass parameter.  The relation with the
	mass-normalized one-form of Minahan--Nemeschansky--Warner is addressed in
	Section~\ref{ssec:SWmassnormalization}.
	
	\subsection{Algebraic gauge reduction}
	\label{ssec:algebraic-gauge}
	
	For $s\ne0$, add the exact differential of
	\begin{equation}
		\chi_{\mathrm{red}}=-\frac23xy+\frac{y}{6s}
	\end{equation}
	to~\eqref{eq:SWgauge}.  The resulting representative is
	\begin{equation}
		\Lambda_C:=\lambda_C+\dd\chi_{\mathrm{red}}
		=-\frac{8s-3}{12s}\frac{x^2\dd x}{y}
		-\frac1{6s}\frac{x\,\dd x}{y}
		+\frac{4t}{3}\frac{\dd x}{y}
		-\left(\frac{x}{3y}-\frac1{12sy}\right)\dd t.
		\label{eq:Lambdadef}
	\end{equation}
	Writing
	\begin{equation}
		\Lambda_C=\mathfrak a(x,t,s)\,\dd x
		-\mathfrak b(x,t,s)\,\dd t,
		\qquad
		\mathfrak a
		=-\frac{8s-3}{12s}\frac{x^2}{y}
		-\frac1{6s}\frac{x}{y}
		+\frac{4t}{3y},
		\qquad
		\mathfrak b=\frac{x}{3y}-\frac1{12sy},
		\label{eq:abdef}
	\end{equation}
	exhibits the algebraic primitive in the meromorphic basis of
	Section~\ref{ssec:meromorphic-differentials}.  Since the gauge correction is
	exact,
	\begin{equation}
		\dd\Lambda_C=\Omega_C,
		\qquad
		\operatorname{Res}_{C_\pm}\Lambda_C
		=\operatorname{Res}_{C_\pm}\lambda_C.
	\end{equation}
	At $s=0$ this particular representative is singular, but the form-level
	object below is regular and gives the reduction directly.
	
	\subsection{The form-level identity and the Seiberg--Witten gauge}
	\label{ssec:SW-form-level}
	
	Let the Picard--Fuchs operator $\mathcal L_t(s)$ act coefficient-wise on forms.  Since its
	coefficients depend on $t$, it does not commute with $\partial_t$.  From
	\eqref{eq:abdef} one obtains
	\begin{equation}
		\mathcal L_t(s)\Omega_C
		=\dd  \left(\mathcal L_t(s)\Lambda_C\right)
		-\left(\partial_t\mathcal L_t(s)\right)\mathfrak a\,
		\dd t\wedge\dd x,
		\label{eq:formlevelPF}
	\end{equation}
	where $\partial_t\mathcal L_t$ differentiates only the coefficients of the
	operator.  The equivalent Griffiths reduction is more economical.  With
	$K=1/y$,
	\begin{equation}
		\mathcal L_t(s)K=\partial_xG,
		\qquad
		G=R_1K+R_3K^3,
		\label{eq:KPF}
	\end{equation}
	where $R_1,R_3$ are recorded in Appendix~\ref{ss:coeffs}.  Therefore
	\begin{equation}
		\mathcal L_t(s)\Omega_C=-\dd(G\,\dd t).
		\label{eq:omegaPF}
	\end{equation}
	
	The connection with the Seiberg--Witten gauge can be stated intrinsically.
	
	\begin{prop}
		\label{prop:horizontal-SW-gauge}
		At fixed $s$, on a simply connected chart of the smooth total space,
		there is a gauge
		representative of the associated Seiberg--Witten equations
		\begin{equation}
			\lambda_{\mathrm h}=-\mathscr A(x,t,s)\,\dd t,
			\qquad
			\partial_x\mathscr A=K,
			\qquad
			\dd\lambda_{\mathrm h}=\Omega_C,
			\label{eq:horizontal-SW}
		\end{equation}
		which differs locally from $\lambda_C$ by an exact form.  Its
		base-dependent gauge ambiguity can be fixed so that
		\begin{equation}
			\mathcal L_t(s)\lambda_{\mathrm h}=-G\,\dd t,
			\qquad
			\mathcal L_t(s)\dd\lambda_{\mathrm h}=-\dd(G\,\dd t).
			\label{eq:SW-horizontal-PF}
		\end{equation}
	\end{prop}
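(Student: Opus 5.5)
The plan is to build $\lambda_{\mathrm h}$ directly from a fiberwise primitive of $K=1/y$, check the Seiberg--Witten equations for it, and then use the remaining freedom, which depends only on $t$, to match the Griffiths reduction \eqref{eq:KPF}.

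\emph{Construction.} Work on a simply connected chart $U$ of the smooth total space $\mathcal C^\circ$ at fixed $s$, with local coordinates $(x,t)$ away from the branch locus and the sections $C_\pm$, patching in the usual way elsewhere. Choose a local section $x_*(t)$ of the fibration inside $U$ and set
\[
\mathscr A_0(x,t)=\int_{x_*(t)}^{x}\frac{\dd x'}{y(x',t)} .
\]
By construction $\partial_x\mathscr A_0=K$. Put $\lambda_{\mathrm h}=-\mathscr A\,\dd t$ with $\mathscr A=\mathscr A_0+c(t)$, where $c$ is still free. Then
\[
\dd\lambda_{\mathrm h}=-\partial_x\mathscr A\,\dd x\wedge\dd t=K\,\dd t\wedge\dd x=\Omega_C ,
\]
independently of $c$. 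By \eqref{eq:SWgauge} we also have $\dd\lambda_C=\Omega_C$, so $\lambda_{\mathrm h}-\lambda_C$ is closed on $U$, and by the Poincar\'e lemma it is exact. This gives the gauge statement. Because the two forms differ by an exact form, periods over fiber cycles, and the $\partial_t$-derivative identity of the Seiberg--Witten equations in cohomology, are inherited from $\lambda_C$.

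\emph{Fixing the gauge.} The operator $\mathcal L_t(s)$ acts on the coefficient of $\dd t$, so
\[
\mathcal L_t(s)\lambda_{\mathrm h}=-(\mathcal L_t\mathscr A)\,\dd t .
\]
The coefficients of $\mathcal L_t$ do not depend on $x$, so $\mathcal L_t$ commutes with $\partial_x$. Combining this with \eqref{eq:KPF} gives
\[
\partial_x(\mathcal L_t\mathscr A)=\mathcal L_t K=\partial_x G .
\]
Hence $\mathcal L_t\mathscr A-G=h(t)$ for some function $h$ of $t$ alone on the chart. It remains to choose $c$ with $\mathcal L_t c=-h$, which makes $\mathcal L_t\mathscr A=G$. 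This is a linear second-order ODE in $t$ with coefficients analytic on the chart, which avoids $t=0,t_c^{(\pm)}$ and the zeros of $D_0D_1$. Since the chart is simply connected, variation of parameters using the period basis of \S\ref{sec:GMPF} produces a holomorphic solution $c(t)$. Adding $c(t)\,\dd t$ changes $\lambda_{\mathrm h}$ by $\dd C(t)$ with $C'=c$, which is exact, so the gauge property is preserved. This yields the first identity in \eqref{eq:SW-horizontal-PF}.

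\emph{The second identity.} It follows by applying $\dd$ to the first. Since $\mathcal L_t$ has no $x$-dependent coefficients,
\[
\mathcal L_t\,\dd\lambda_{\mathrm h}=\mathcal L_t(K)\,\dd t\wedge\dd x=\partial_xG\,\dd t\wedge\dd x=-\dd(G\,\dd t),
\]
which agrees with \eqref{eq:omegaPF}.

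\emph{Main obstacle.} The delicate step is the claim that $\mathcal L_t\mathscr A-G$ depends on $t$ alone. This requires $G$ from \eqref{eq:KPF} to be a genuine single-valued function on the chart, and requires the commutation of $\mathcal L_t$ with $\partial_x$ to hold even though the lower limit $x_*(t)$ moves with $t$. Both points are handled by working with $\partial_x$-derivatives only, so the endpoint motion is absorbed into $h(t)$.
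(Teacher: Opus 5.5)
Your proof is correct and follows the paper's argument step for step. You choose a local Abelian integral with $\partial_x\mathscr A=K$, check that $\dd(-\mathscr A\,\dd t)=\Omega_C$ so the difference with $\lambda_C$ is locally exact, commute $\mathcal L_t(s)$ with $\partial_x$ and use \eqref{eq:KPF} to get $\partial_x(\mathcal L_t\mathscr A-G)=0$, and then absorb the leftover function of $t$ by a base-dependent exact form. You make that last step explicit by solving $\mathcal L_t c=-h$ on a chart that avoids the apparent singularity $D_0=0$ of $\mathcal L_t(s)$, a harmless shrinking that the paper leaves implicit.

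One caution: drop your aside that fiber periods and the cohomological $\partial_t$-identity are ``inherited from $\lambda_C$.'' The exactness holds only on the chart, and $\lambda_{\mathrm h}$, being a multiple of $\dd t$, restricts to zero on every fiber. That remark is not needed for the proposition.
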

	
	\begin{proof}
		Choose a local Abelian integral $\mathscr A$ with
		$\partial_x\mathscr A=K$.  Then
		$\dd(-\mathscr A\,\dd t)=K\,\dd t\wedge\dd x=\Omega_C$, so
		$\lambda_{\mathrm h}-\lambda_C$ is locally exact.  Applying
		$\mathcal L_t(s)$ to $\partial_x\mathscr A=K$ and using
		\eqref{eq:KPF} shows that
		$\partial_x(\mathcal L_t\mathscr A-G)=0$.  The remaining function of
		$(t,s)$ can be removed by adding a base-dependent exact form to
		$\lambda_{\mathrm h}$.  This gives~\eqref{eq:SW-horizontal-PF}.
	\end{proof}
	
	This proposition is the precise form of the Seiberg--Witten interpretation
	of the inhomogeneity.  The one-form $G\,\dd t$ is not itself a
	Seiberg--Witten differential and it does not change the residues of
	$\lambda_C$; rather, it is (up to sign) the image of a Seiberg--Witten primitive under
	the Picard--Fuchs operator in the horizontal gauge.  Its endpoint values
	are consequently the source terms for relative periods.
	
	\subsection{Normalization and mass residues in the \texorpdfstring{$E_7$}{E7} variables}
	\label{ssec:SWmassnormalization}
	
	An isomorphism of elliptic fibrations does not by itself identify their
	Seiberg--Witten one-forms. Indeed, the Seiberg--Witten equations define a sheaf-theoretic condition that determines a special K\"ahler metric~\cite{Freed1999} in terms of the monodromy of the singular fibers and the residues of the Seiberg--Witten one-forms on the base of the fibration~\cite[\S 3.1]{SeibergWitten1994}. This metric is computed in Section~\ref{ssec:SWmetrics}. 
	\par To compare the normalizations, write the photon quartic \eqref{eq:photon-curve}
	in coordinates $(z,w)$,
	\begin{equation}
		w^2=t-z^2+z^3-sz^4,
		\qquad
		\lambda_C=2w\,dz,
		\qquad
		\dd\lambda_C=\frac{\dd t\wedge\dd z}{w},
		\label{eq:quarticSWrenamed}
	\end{equation}
	and reserve $(x,y,u)$ for~\eqref{eq:ec3}.  At fixed mass $m$, the
	birational map and~\eqref{eq:E7dict} give the relative identity
	\begin{equation}
		\frac{\dd z}{2w}
		=\frac{\dd\xi}{\eta}
		=\frac{iu}{4}\frac{\dd x}{y},
		\qquad\text{hence}\qquad
		\frac{\dd z}{w}=\frac{iu}{2}\frac{\dd x}{y}.
		\label{eq:relativeSWtransform}
	\end{equation}
	Together with
	\begin{equation}
		t=\frac{128\cos^2m}{u},
		\qquad
		\dd t=-\frac{128\cos^2m}{u^2}\,\dd u,
	\end{equation}
	and denoting the quartic primitive by
	\(\widetilde\lambda_C\), this yields
	\begin{equation}
		\dd\widetilde\lambda_C
		=\rho(u,m)\,\Omega_{\mathrm{MNW}},
		\qquad
		\rho(u,m)=\frac{64i\cos^2m}{u},
		\qquad
		\Omega_{\mathrm{MNW}}=\frac{\dd x}{y}\wedge\dd u.
		\label{eq:SWtwoformcomparison}
	\end{equation}
	The factor $\rho$ is due to both the inversion of the base and the
	$u$-dependent rescaling of the relative differential.  It can be absorbed
	locally by $v=64i\cos^2m\log u$, but this coordinate is multivalued.
	Accordingly, $\lambda_C$ is naturally normalized over the quartic
	$t$-coordinate, not over the physical MNW Coulomb coordinate $u$.
	
	The residue data give an independent obstruction.  With
	$s=-\tfrac14\tan^2m$ and
	$\sqrt{-s}=\tfrac12\tan m$, equation~\eqref{eq:SWresidues} becomes
	\begin{equation}
		\operatorname{Res}_{C_\pm}(\lambda_C)
		=\mp\frac{4\cos^3m}{\sin^5m}.
		\label{eq:quarticResidueMass}
	\end{equation}
	These residues are nonlinear in the additive Wilson-line mass and diverge
	as $\mp4m^{-5}$ when $m\to0$.  By contrast, the mass-normalized MNW
	differential has
	\begin{equation}
		\operatorname{Res}_{D_\alpha}
		\lambda_{\mathrm{MNW}}
		=
		\sum_i q_{\alpha i}m_i,
		\label{eq:MNWlinearResidues}
	\end{equation}
	with coefficients fixed by the flavor weights
	\cite[Appendix~A.2]{MinahanNemeschanskyWarner1997}.
	The logarithms in the MNW construction lift the periodic trigonometric
	dependence of the curve to the universal cover of mass space and thereby
	retain this additive information.
	
	An exact meromorphic correction cannot change~\eqref{eq:quarticResidueMass},
	because exact differentials have zero residues.  Passing to the
	mass-normalized form requires a closed, globally non-exact differential of
	the third kind with the appropriate pole divisor.  The relevant comparison
	is with the full two-mass differential of
	\cite[Appendix~A.2]{MinahanNemeschanskyWarner1997}, specialized to
	$m_-=0$ only after its spinor and null contributions have been combined.
	Thus the quartic and cubic equations identify the elliptic fibers, while
	\eqref{eq:SWtwoformcomparison} and~\eqref{eq:quarticResidueMass} measure the
	two independent differences between the corresponding normalized
	Seiberg--Witten geometries.
	
	\subsection{The induced special K\"ahler metrics}
	\label{ssec:SWmetrics}
	
	The distinction between the two Seiberg--Witten normalizations is also
	visible in the special K\"ahler metric.  Choose a locally constant
	symplectic basis \(A,B\in H_1(C_u,\mathbb Z)\), \(A\cdot B=1\), and set
	\begin{equation}
		\Pi_A=\oint_A\frac{\dd x}{y},
		\qquad
		\Pi_B=\oint_B\frac{\dd x}{y},
		\qquad
		\tau_{\rm ell}=\frac{\Pi_B}{\Pi_A}.
		\label{eq:cubicperiods}
	\end{equation}
	To avoid confusion with the Kiselev parameter, denote the special
	coordinates associated with any Seiberg--Witten differential \(\lambda\) by
	\(a_{\rm sp}=\oint_A\lambda\) and
	\(a_{D,\rm sp}=\oint_B\lambda\).  The rank-one metric is, up to a fixed
	positive convention-dependent constant \(c_{\rm SK}\),
	\begin{equation}
		g_{u\bar u}
		=
		c_{\rm SK}\,\operatorname{Im}\tau_{\rm eff}
		\left|\frac{\dd a_{\rm sp}}{\dd u}\right|^2,
		\qquad
		\tau_{\rm eff}=\frac{\dd a_{D,\rm sp}}{\dd a_{\rm sp}}.
		\label{eq:SKmetricgeneral}
	\end{equation}
	This is the standard rigid special K\"ahler construction
	\cite{Freed1999,ArgyresLotitoLuMartone2016}.
	
	For the transported quartic primitive, the two-form comparison
	\eqref{eq:SWtwoformcomparison} gives
	\begin{equation}
		\frac{\dd a_C}{\dd u}=-\rho\,\Pi_A,
		\qquad
		\frac{\dd a_{D,C}}{\dd u}=-\rho\,\Pi_B,
		\qquad
		\rho(u,m)=\frac{64i\cos^2m}{u}.
		\label{eq:quarticperiodderivativesu}
	\end{equation}
	The sign is immaterial for the metric.  For the MNW differential, write
	its Gauss--Manin normalization as
	\begin{equation}
		\nabla_{\partial_u}[\lambda_{\rm MNW}]
		=
		\kappa_{\rm MNW}\left[\frac{\dd x}{y}\right].
		\label{eq:MNWGMnormalization}
	\end{equation}
	Fiberwise exact terms drop out of periods, and hence
	\begin{equation}
		\frac{\dd a_{\rm MNW}}{\dd u}
		=\kappa_{\rm MNW}\Pi_A,
		\qquad
		\frac{\dd a_{D,\rm MNW}}{\dd u}
		=\kappa_{\rm MNW}\Pi_B.
		\label{eq:MNWperiodderivatives}
	\end{equation}
	It follows immediately that the two descriptions have the same effective
	coupling,
	\begin{equation}
		\tau_C=\tau_{\rm MNW}=\tau_{\rm ell}
		=\frac{\Pi_B}{\Pi_A},
	\end{equation}
	and their metrics are
	\begin{equation}
		g^{(C)}_{u\bar u}
		=
		c_{\rm SK}|\rho|^2\operatorname{Im}\tau_{\rm ell}\,|\Pi_A|^2,
		\qquad
		g^{({\rm MNW})}_{u\bar u}
		=
		c_{\rm SK}|\kappa_{\rm MNW}|^2
		\operatorname{Im}\tau_{\rm ell}\,|\Pi_A|^2.
		\label{eq:two-SK-metrics}
	\end{equation}
	Thus equality of the elliptic fibers fixes the period ratio, whereas the
	normalization of the holomorphic symplectic form fixes the special
	K\"ahler scale.  Locally the factor \(\rho\) may be absorbed into
	\(v=64i\cos^2m\log u\), but \(v\) is generally multivalued and is not the
	physical Coulomb coordinate used in the MNW normalization.
	\subsection{The inhomogeneous equation for the deflection angle}
	\label{ssec:inhomogeneous-deflection}
	
	The form-level identity becomes an equation for the lensing normal function
	only after the moving branch-point endpoint is treated correctly.  Set
	\(\Delta\phi(t,s)=\int_0^{x_0(t,s)}K\,\dd x\), where \(x_0\) is the
	turning root.  Although \(K\) diverges as
	\((x_0-x)^{-1/2}\), the divergences produced by differentiating the
	moving endpoint cancel those of the Griffiths certificate.  Equivalently,
	one may replace \(x_0\) by \(x_0-\varepsilon\), differentiate the
	regulated integral, combine all terms, and then take
	\(\varepsilon\to0^+\), as in~\cite[Sec.~3.1]{Sasaki2024}.
	
	\begin{prop}[Gauss--Manin equations for the lensing normal function]
		\label{prop:lensing-GM-system}
		The lifted normal function \(\Delta\phi\) satisfies
		\begin{equation}
			\mathcal L_t(s)\Delta\phi
			=
			-\frac{
				8s^3(8s-3)t^2
				+4s(13s-3)(4s-1)t
				-(4s-1)^2
			}{
				t^{3/2}
				(16s^2t-6st+4s-1)
				(256s^3t^2+128s^2t-144st+16s+27t-4)
			} ,
			\label{eq:GMinhom1}
		\end{equation}
		and
		\begin{equation}
			\widetilde{\mathcal L}_{t,s}\Delta\phi
			=
			-\frac{2\sqrt t}{16s^2t-6st+4s-1}.
			\label{eq:GMinhom2}
		\end{equation}
		These are respectively Eqs.~(3.14) and~(3.24) of
		\cite{Sasaki2024}, in the variables used here.
	\end{prop}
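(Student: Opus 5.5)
The plan is to differentiate the regulated integral $\Delta\phi_\varepsilon(t,s)=\int_0^{x_0-\varepsilon}K\,\dd x$, apply the Griffiths certificate \eqref{eq:KPF}, and then let $\varepsilon\to0^+$. The boundary contributions at $x=0$ will produce the source terms, while those at the moving turning point cancel against the endpoint-derivative terms. The starting point is that $q_{s,t}(x_0)=0$ gives $\partial_t x_0=-1/q'(x_0)$ and $\partial_s x_0=x_0^4/q'(x_0)$. Here $q'(x_0)\neq0$ on the smooth locus, so $x_0$ is locally a simple root.

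For \eqref{eq:GMinhom1}, fix $\varepsilon>0$ and differentiate under the integral sign up to second order in $t$.
\begin{itemize}
\item The interior contribution is $\int_0^{x_0-\varepsilon}\mathcal L_t(s)K\,\dd x$. By \eqref{eq:KPF} it equals $G(x_0-\varepsilon)-G(0)$.
\item The remaining terms are endpoint terms: $K(x_0-\varepsilon)\,\partial_t x_0$, its $t$-derivative, and $(\partial_tK)(x_0-\varepsilon)\,\partial_t x_0$, each weighted by the coefficients $1$, $P$ of $\mathcal L_t$.
\end{itemize}
Near $x_0$ I would expand in the local parameter $\delta=x_0-x$. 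Since $q\approx q'(x_0)\,\delta$, we have $K\sim(q'(x_0)\delta)^{-1/2}$ and $K^3\sim(q'(x_0)\delta)^{-3/2}$. The singular parts of $G=R_1K+R_3K^3$ at $x_0-\varepsilon$ should exactly cancel the endpoint terms, leaving $O(\varepsilon^{1/2})\to0$. This is the content of the remark preceding the proposition, following \cite[Sec.~3.1]{Sasaki2024}.

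Establishing this cancellation is the main obstacle. It requires the explicit form of $R_1,R_3$ from Appendix~\ref{ss:coeffs}, and it also requires checking that no finite remainder survives. I would organize the check by noting the following. The Griffiths certificate is a rational expression that can be written as $(\text{polynomial in }x)/y^3$. For the endpoint terms to cancel, the coefficient of $\delta^{-3/2}$ must match $-\partial_t x_0\,\partial_t K$, and the coefficient of $\delta^{-1/2}$ must match the remaining endpoint terms. Equivalently, one can observe that $\mathcal L_t$ applied to the integral from $x_0$ along a full closed cycle equals zero. The half-period $\int_{x_0}^{x_1}K\,\dd x$ between two branch points then satisfies the homogeneous equation, since it is half of a closed period. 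Consequently, the local contribution at a branch point must vanish. I would use this argument to bypass the explicit $\varepsilon$-bookkeeping: write
\[
\Delta\phi=\int_0^{x_1}K\,\dd x-\tfrac12\oint K\,\dd x ,
\]
where $x_1$ is a branch point and the loop encircles $x_0$ and $x_1$, so that only the fixed endpoint $x=0$ matters. Then $\mathcal L_t(s)\Delta\phi=-G(0)$ up to the branch-point terms, and these are killed by the same argument. Evaluating at $x=0$, where $y=\sqrt t$, gives $K(0)=t^{-1/2}$ and $K^3(0)=t^{-3/2}$. Therefore
\[
-G(0)=-\frac{R_1(0)\,t+R_3(0)}{t^{3/2}} ,
\]
and substituting the appendix values of $R_1(0),R_3(0)$ should reproduce the stated rational numerator over $t^{3/2}D_0D_1$.

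For \eqref{eq:GMinhom2}, I would proceed in the same way with the first-order operator $\partial_s+A\partial_t+B$. The key step is a first-order certificate $(\partial_s+A\partial_t+B)K=\partial_x H$ with $H=h(x,t,s)K$ for a polynomial $h$. I would find $h$ by making a polynomial ansatz, using $\partial_sK=x^4K^3/2$ and $\partial_tK=-K^3/2$, and solving a linear system; $h$ should be low degree, roughly cubic in $x$. The existence of such an $h$ is guaranteed by \eqref{eq:GMeqn} applied to closed periods. The branch-point contribution again vanishes, either by the half-period argument or because $H(x_0)\cdot(\text{regularity})$ cancels $K\,\partial_s x_0+A K\,\partial_t x_0$. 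The source is then $-H(0)=-h(0)\,t^{-1/2}$. Matching to $-2\sqrt t/D_0$ requires $h(0)=2t/D_0$, which I would verify directly from the ansatz solution. Finally, I would compare with Sasaki's (3.14) and (3.24) under $s=Q^2/4M^2$, $t=4M^2/b^2$.
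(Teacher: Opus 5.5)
Your overall route is the paper's: regularize at the turning point, use the certificate $\mathcal L_t(s)K=\partial_xG$ and a first-order certificate for $\widetilde{\mathcal L}_{t,s}$, and read the source off at $x=0$. The evaluations there are correct. With the appendix coefficients, $-(R_1(0)\,t+R_3(0))\,t^{-3/2}$ reproduces the right-hand side of \eqref{eq:GMinhom1} exactly. Your $h$ is the paper's $\widetilde R_1$, with $\widetilde R_1(0)=2t/D_0$. (It is quadratic, not cubic: if $\deg h=k$, the top coefficient of $h'q-\tfrac12hq'$ is $(2-k)s\,h_k x^{k+3}$, which forces $k\le2$ for $s\neq0$.)

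The gap is in the step you single out yourself, the vanishing of the turning-point contribution, where your proposed bypass does not work. In $\Delta\phi=\int_0^{x_1}K\,\dd x-\tfrac12\oint K\,\dd x$, the first term still ends at a moving branch point, so nothing has been eliminated. Moreover, homogeneity of the half-period $\int_{x_0}^{x_1}K\,\dd x$ only shows that the regularized local contributions at $x_0$ and at $x_1$ are equal, not that they vanish. As written, the absence of a finite remainder at $x_0$ is therefore assumed rather than proved. The $\varepsilon$-expansion you sketch would establish it, but you do not carry it out.

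A clean fix avoids branch-point endpoints altogether. Take a loop in the $x$-plane based at $x=0$ that encircles $x_0$ and no other root of $q_{s,t}$, and lift it to a path $\gamma$ on $C_{s,t}$ from $(0,\sqrt t)$ to $(0,-\sqrt t)$.
\begin{itemize}
\item Shrinking the loop onto $[0,x_0]$ gives $\int_\gamma K\,\dd x=2\Delta\phi$.
\item The loop can be held fixed for $(t,s)$ in a small neighborhood, so $\mathcal L_t(s)$ and $\widetilde{\mathcal L}_{t,s}$ pass under the integral with no endpoint-motion terms.
\item The functions $G=R_1/y+R_3/y^3$ and $\widetilde R_1/y$ are rational on the curve and odd under $y\mapsto-y$.
\end{itemize}
Hence $2\,\mathcal L_t(s)\Delta\phi=G(0,-\sqrt t)-G(0,\sqrt t)=-2G(0,\sqrt t)$, and likewise $\widetilde{\mathcal L}_{t,s}\Delta\phi=-\widetilde R_1(0)\,t^{-1/2}$. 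This also proves that the regularized turning-point contribution is zero, which the paper asserts via Sasaki's $\varepsilon$-argument.
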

	
	For the second equation one uses
	\begin{equation}
		\widetilde{\mathcal L}_{t,s}K
		=
		\partial_x(\widetilde R_1K),
		\qquad
		\widetilde R_1
		=
		\frac{-x^2+2(1+4st-3t)x+2t}
		{16s^2t-6st+4s-1}.
	\end{equation}
	Applying this identity to
	\(\partial_x\mathscr A=K\) gives
	\(\partial_x(\widetilde{\mathcal L}_{t,s}\mathscr A
	-\widetilde R_1K)=0\).  The corresponding base normalization of the
	Abelian integral therefore yields, locally,
	\begin{equation}
		\widetilde{\mathcal L}_{t,s}\mathscr A=\widetilde R_1K,
		\qquad
		\widetilde{\mathcal L}_{t,s}\lambda_{\rm h}
		=-\widetilde R_1K\,\dd t.
		\label{eq:SW-horizontal-GM}
	\end{equation}
	Its compatibility with the \(t\)-component is the flatness condition of
	the two-variable Gauss--Manin system.
	For finite source and receiver, the same regulated argument yields the
	corresponding certificate at every endpoint; the elementary endpoint
	angles in~\eqref{eq:finite-distance-angle} must then be differentiated
	separately.
	
	Proposition~\ref{prop:lensing-GM-system} now follows from the
	Seiberg--Witten form argument rather than from an unrelated boundary
	calculation.  The homogeneous operators are the Gauss--Manin connection
	of the elliptic surface.  For the \(t\)-component,
	Proposition~\ref{prop:horizontal-SW-gauge} identifies the source as the
	regularized endpoint pairing of
	\(-\mathcal L_t\lambda_{\rm h}=G\,\dd t\); the
	\(\widetilde{\mathcal L}_{t,s}\)-component follows in the same way
	from~\eqref{eq:SW-horizontal-GM}.
	Thus the system is the relative-cohomology boundary of the
	Gauss--Manin action on a Seiberg--Witten primitive.
	
	\section{Weak-field and strong-field expansions}
	\label{sec:expansions}
	Every nodal fiber of the family carries two canonical asymptotic expansions of a relative period, depending on how the endpoints of the integration chain are scaled relative to the shrinking neighborhood of the node.
	\begin{prop}[Two asymptotic regimes]
		\label{prop:twoexpansions}
		Let $t\to t_c$ be an $I_1$ degeneration of the fiber at $x=x_c$, with
		local model
		\(q(x,t)=A(x-x_c)^2-(t_c-t)+\cdots\), \(A\ne0\).
		\begin{enumerate}[label=(\roman*)]
			\item (Vanishing-cycle scaling.) If both endpoints are scaled into the
			shrinking neighborhood, $x-x_c=\sqrt{t_c-t}\,X$ with $X$ fixed, the
			relative period tends to a finite limit, an elementary function (inverse
			trigonometric or inverse hyperbolic) of the rescaled endpoints.
			\item (Relative period.) If \(A>0\), and one endpoint is held fixed away from $x_c$
			while the other (the turning point) approaches the node as $t\to t_c^-$, the relative period diverges as
			\[
			-\frac{1}{2\sqrt A}\log(t_c-t)+O(1).
			\]
		\end{enumerate}
	\end{prop}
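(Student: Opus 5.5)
The plan is to reduce both statements to the quadratic local model, with the higher-order terms treated as a controlled perturbation. Put \(\epsilon=t_c-t\) and \(u=x-x_c\). Near the node I will write
\[
q(x,t)=-\epsilon+A u^2\bigl(1+h(u)\bigr)+\epsilon\,k(u,\epsilon),
\]
with \(h,k\) analytic and \(h(0)=0\). This form holds because \(q(x_c,t_c)=\partial_x q(x_c,t_c)=0\) and \(\partial_t q=1\) for the photon quartic. Replacing \(u\) by the analytic Morse coordinate \(v=u\sqrt{1+h(u)}\) turns the leading part into exactly \(Av^2-\epsilon\). The error term \(\epsilon k\) is then \(O(\epsilon)\) uniformly in a fixed neighbourhood, and \(\dd x=(1+O(v))\,\dd v\).

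For (i), I substitute \(u=\sqrt{\epsilon}\,X\). Then \(q=\epsilon\bigl(AX^2-1+O(\sqrt\epsilon\,X^3)+O(\sqrt\epsilon)\bigr)\) and \(\dd x=\sqrt\epsilon\,\dd X\), so the factors of \(\sqrt\epsilon\) cancel in \(\dd x/\sqrt q\). The integrand therefore converges to \(\dd X/\sqrt{AX^2-1}\), uniformly on the fixed compact \(X\)-interval. One point needs care: if an endpoint is a turning point, the integrand has an integrable inverse-square-root singularity there. I handle this by a second change of variable (\(X=X_0\cosh\theta\) or \(X=X_0\cos\theta\)) before passing to the limit. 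The limit integral evaluates to \(A^{-1/2}\operatorname{arccosh}(\sqrt A X)\) when \(A>0\). When \(A<0\) the radicand \(AX^2-1\) is negative, so one must take \(q<0\) (equivalently \(\epsilon<0\)), and the limit becomes an arcsine. This gives the claimed elementary limit.

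For (ii), with \(A>0\) and \(t\to t_c^-\), the turning root is \(v_0=\sqrt{\epsilon/A}\,(1+O(\sqrt\epsilon))\). I split the relative period at a fixed small \(\delta>0\). The piece from the far endpoint to \(v=\delta\) stays away from the node, so it converges to a finite integral and contributes \(O(1)\). On \([v_0,\delta]\) I compare the integrand with \(1/\sqrt{Av^2-\epsilon}\). The multiplicative error is \(1+O(v)+O(\epsilon/(Av^2-\epsilon))\). The \(O(v)\) part integrates against \(1/\sqrt{Av^2-\epsilon}\) to \(O(\delta)\). The remaining piece is the main obstacle: near \(v_0\) it is not small pointwise. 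To control it, I will write \(Av^2-\epsilon+\epsilon k=A(v^2-v_0^2)(1+O(v))\), which is possible because \(v_0\) is an exact root. This factorization absorbs the \(\epsilon k\) term into the root location and a bounded analytic factor, so only the \(O(v)\) correction survives. The model integral is then explicit:
\[
\int_{v_0}^{\delta}\frac{\dd v}{\sqrt{A(v^2-v_0^2)}}
=\frac{1}{\sqrt A}\operatorname{arccosh}\frac{\delta}{v_0}
=\frac{1}{\sqrt A}\log\frac{2\delta}{v_0}+o(1).
\]
Substituting \(v_0\sim\sqrt{\epsilon/A}\) gives \(-\frac{1}{2\sqrt A}\log\epsilon+O(1)\). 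For the photon-sphere nodes, I expect to check \(A>0\) directly from \(q''(x_c)/2\) at \(x_c^{(\pm)}\), using Section~\ref{sec:surface}.

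If the factorization step turns out to be awkward, the fallback is the Weierstrass description~\eqref{eq:phiRSwp}. In that picture the claim is the standard statement that \(\wp^{-1}\) near an \(I_1\) fibre behaves like \(\frac{1}{2\pi i}\) times the vanishing period times \(\log\epsilon\), which is Picard--Lefschetz. The elementary estimate above avoids this detour.
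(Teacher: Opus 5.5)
Your proposal is correct and follows essentially the paper's route. The paper reduces to the quadratic local model and evaluates the cut-off integral explicitly as $A^{-1/2}\operatorname{arcosh}\bigl(\delta\sqrt{A/(t_c-t)}\bigr)+O(1)$, with the arcsine limit in the weak-field scaling; it cites the local Picard--Lefschetz picture only as the conceptual reason, so your Morse-coordinate estimates simply make its $O(1)$ rigorous. One simplification: $q$ is linear in $t$ with $\partial_t q\equiv1$, so $q=Av^2-\epsilon$ holds exactly in your Morse coordinate, giving $k\equiv0$ and $v_0=\sqrt{\epsilon/A}$ exactly, and the factorization step you flag as the main obstacle is not needed. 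Also, for $A<0$ you mean $q>0$, i.e.\ $t>t_c$, with rescaling by $\sqrt{|t_c-t|}$.
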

	Both statements follow from the same local Picard--Lefschetz monodromy of the $I_1$ fiber, evaluated on two different families of chains.
	\par Let us first consider the \emph{weak-field expansion} around the $I_1$ fiber, for $t\to0$. With $\tau=\sqrt t$, the degenerate cubic~\eqref{eq:nodalfiber} is parameterized by $\wp_0(w)=-\tfrac13+\csc^2w$, or, equivalently, $w=\wp_0^{-1}(\xi)=\arcsin\bigl((\xi+ \tfrac13)^{-1/2}\bigr)$. The turning point has the expansion
	\begin{equation}
		x_T=\tau+\tfrac12\tau^2+\tfrac{5-4s}{8}\tau^3+\tfrac{2-3s}{2}\tau^4+O(\tau^5),
		\label{eq:xTweak}
	\end{equation}
	giving $\xi_T^{(\tau)}=\tfrac53-2\tau+(2s-1)\tau^2+O(\tau^3)\to\tfrac53$ and $w_T^{(\tau)}=\tfrac\pi4+\tfrac\tau2+O(\tau^2)$.  In the
	asymptotically flat finite-distance scaling, set
	\(x_i=\tau\chi_i\) and \(x=\tau X\), keeping
	\(\chi_i=b/r_i\) fixed.  Then
	\begin{equation}
		\begin{aligned}
			2w_i^{(\tau)}
			&=\int_0^{\chi_i}
			\frac{\dd X}{\sqrt{1-X^2+\tau X^3-s\tau^2X^4}}\\
			&=\arcsin\chi_i
			-\frac\tau2\left(\frac1{\chi_i'}+\chi_i'-2\right)
			+O(\tau^2),
			\qquad \chi_i'=\sqrt{1-\chi_i^2}.
		\end{aligned}
		\label{eq:Wiweak}
	\end{equation}
	consistent with Proposition~\ref{prop:twoexpansions}(i): the leading term is the elementary period of the vanishing cycle, and the correction is regular in $\tau=\sqrt t$.  Using~\eqref{eq:phiRSwp} then gives the following weak-field expansion of the coordinate angular integral
	\begin{equation}
		\phi_{RS}=\pi-\arcsin\chi_R-\arcsin\chi_S
		+\tau\left[\frac{2-\chi_R^2}{2\sqrt{1-\chi_R^2}}+\frac{2-\chi_S^2}{2\sqrt{1-\chi_S^2}}\right]
		+O(\tau^2).
		\label{eq:phiRSweak}
	\end{equation}
	\par Let us also consider the \emph{strong-field expansion} around the $I_1$ fiber, for $t\to t_c^{(+)}$. Near $t_c^{(+)}$ the same local analysis as before applies with $x_c^{(+)}$ in place of the origin; because the receiver and source now sit at fixed, generic values of $x$ away from the node, case~(ii) of Proposition~\ref{prop:twoexpansions} is relevant. Concretely, near $x_0\to x_c^{(+)}$, we have
	\begin{equation}
		\begin{aligned}
			I_{\mathrm{sing}}(t)
			:=\int_{x_R}^{x_0}\frac{\dd x}{\sqrt{q_{s,t}(x)}}
			&=
			\frac1{\sqrt A}\,
			\operatorname{arcosh}
			\left(
			\delta\sqrt{\frac{A}{t_c^{(+)}-t}}
			\right)+O(1)
			\\
			&=
			-\frac1{2\sqrt A}\log(t_c^{(+)}-t)+O(1),
		\end{aligned}
		\label{eq:strongfield}
	\end{equation}
	where \(A=\tfrac12q_{xx}(x_c^{(+)},t_c^{(+)})>0\) and
	\(\delta>0\) is a fixed local cutoff.  This gives the logarithmic
	divergence as the impact parameter approaches its photon-sphere value
	\cite{Darwin1959,Bozza2002}.
	\par Both expansions originate in the same local model of a nodal
	degeneration.  In the weak-field limit the receiver and turning point are
	scaled together into the node; in the strong-field limit the receiver is
	held fixed while only the turning point approaches it.  The two classical
	lensing regimes are therefore the two canonical relative-period
	asymptotics associated with a Kodaira \(I_1\) degeneration.
	%
	\section{Fiber collisions and degeneration locus}
	\label{sec:enhancements}
	
	Two collisions of finite singular fibers occur at distinguished values of
	\(s\), but their local geometries and their lensing consequences are
	different.
	
	\begin{prop}
		\label{prop:I2-collision}
		At \(s=\tfrac14\), the two components of the discriminant locus meet at
		\(t=0\), and the fiber there is of Kodaira type \(I_2\).  The colliding vanishing
		cycles are mutually local.
	\end{prop}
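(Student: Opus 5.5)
Direct computation with the Weierstrass data \eqref{eq:weier} combined with Tate's algorithm. First I will compute the discriminant \eqref{eq:discr} and factor off the power of $t$. Since $g_2(s,0)=4/3$ and $g_3(s,0)=8/27$, we get $\Delta(s,0)=64/27-27\cdot 64/729=0$. Expanding to first order in $t$ gives
\[
\Delta = 3g_2^2\,\partial_t g_2\,t - 54 g_3\,\partial_t g_3\,t + O(t^2)
= \Bigl(3\cdot\tfrac{16}{9}(-16s) - 54\cdot\tfrac{8}{27}\bigl(\tfrac{32}{3}s-4\bigr)\Bigr)t + O(t^2).
\]
The linear coefficient is $-\tfrac{256}{3}s - \tfrac{512}{3}s + 64 = 64(1-4s)$, which vanishes exactly at $s=1/4$. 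So $\operatorname{ord}_{t=0}\Delta\ge 2$ there. I will confirm that the $t^2$ coefficient is nonzero at $s=1/4$, which gives $\operatorname{ord}=2$.

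Equivalently, $\Delta = t\cdot\Delta_1(s,t)$, where $\Delta_1$ is the residual factor whose zeros give $t_c^{(\pm)}$. The statement that the two components meet at $t=0$ is then that $\Delta_1(1/4,0)=0$. In the parametrization \eqref{eq:tcpm}, $s=1/4$ corresponds to $S^2=1$. For $S=1$ this gives $t_c^{(-)}=8\cdot 0/(-2)^3=0$, while $t_c^{(+)}=16/64=1/4$ stays away. So exactly one photon-sphere node, $x_c^{(-)}=-4/(S-3)=2$, collides with the weak-field node. This node is not the physical strong-field node $x_c^{(+)}=1$, which anticipates the later remark.

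The Kodaira type is then immediate. At $t=0$ we have $g_2,g_3\neq0$, so the fiber is multiplicative with $\operatorname{ord}\Delta=2$. Tate's algorithm (or Kodaira's table) gives type $I_2$, since the model \eqref{eq:weier} is minimal at $t=0$.

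For mutual locality, I will use the standard fact that in an $I_n$ fiber arising from a collision of $I_1$'s, the vanishing cycles are all homologous to the single cycle shrinking to the node(s) of the semistable degeneration. The local monodromy is then conjugate to $\begin{pmatrix}1&2\\0&1\end{pmatrix}$. If the two colliding $I_1$ fibers had vanishing cycles $\gamma_1,\gamma_2$ with $\gamma_1\cdot\gamma_2\neq0$, the product of the Picard--Lefschetz transvections would have trace $2-(\gamma_1\cdot\gamma_2)^2\neq 2$. It would therefore not be unipotent, contradicting the $I_2$ monodromy. Hence $\gamma_1\cdot\gamma_2=0$ and $\gamma_1=\pm\gamma_2$, i.e.\ the cycles are mutually local. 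Concretely, on the quartic at $s=1/4$, $t=0$ we have $q=-x^2(1-x/2)^2$, with two distinct double roots $x=0$ and $x=2$. The two vanishing cycles encircle disjoint pairs of branch points, so they are disjoint and homologous in the torus, matching the $I_2$ picture.

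The main obstacle is the bookkeeping in the second-order expansion of $\Delta$, checking that the $t^2$ coefficient at $s=1/4$ is nonzero so the fiber is $I_2$ rather than $I_3$. I will handle it by substituting $s=1/4$ directly into $g_2=4/3-4t$ and $g_3=8/27-4t/3$ and expanding.
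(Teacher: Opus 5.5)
Your proposal is correct, but its main line differs from the paper's. The paper's proof is essentially the observation you add at the end. It factors $q_{1/4,0}(x)=-\tfrac14x^2(x-2)^2$, reads off two rational components meeting at the two nodes $x=0,2$, and concludes type $I_2$ with monodromy conjugate to $T^2$. Mutual locality follows because the vanishing cycles at the two distinct nodes are disjoint and homologous.

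You instead work on the Weierstrass model and argue through the monodromy. Your linear coefficient $64(1-4s)$ is right, and the check you defer goes through. In fact $\Delta=-16\,t\,D_1$ with $D_1|_{s=1/4}=t(4t-1)$, so $\Delta|_{s=1/4}=16t^2(1-4t)$. Together with $g_2(1/4,0)\neq0$, this gives $I_2$ and also exhibits the surviving $I_1$ at $t=1/4$.

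For the trace argument, state explicitly the deformation step it relies on. For $s$ near $1/4$, the circle $|t|=\varepsilon$ encloses $t=0$ and $t_c^{(-)}(s)$ but not $t_c^{(+)}(s)\approx 1/4$. Its monodromy is therefore locally constant in $s$: it equals $T_{\gamma_1}T_{\gamma_2}$ for $s\neq 1/4$ and a conjugate of $T^2$ at $s=1/4$.

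What your route buys is that the Kodaira type is determined on the minimal Weierstrass model, where it is defined. Reading it off the quartic, as the paper does, tacitly uses that the total space is smooth at the nodes, which holds since $\partial_t q=1$. Your route also makes mutual locality a consequence of unipotence. This shows why a non-local collision such as the one at $s=9/32$ cannot be multiplicative (trace $1$).

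The paper's route is shorter and identifies directly on the physical quartic which nodes collide: the weak-field node $x=0$ and the node at $x=2$. This is what the subsequent remark about the untouched strong-field node at $x=1$ needs. You recover it as well from $t_c^{(-)}=0$ and $x_c^{(-)}=2$ at $S=1$.
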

	
	Indeed,
	\begin{equation}
		q_{1/4,0}(x)=-\frac14x^2(x-2)^2.
		\label{eq:I2-factorization}
	\end{equation}
	The two rational components meet at the two nodes \(x=0,2\), and the
	monodromy is conjugate to
	\(T^2=\left(\begin{smallmatrix}1&2\\0&1\end{smallmatrix}\right)\).
	This is a multiplicative, semistable enhancement.  It does not turn the
	physical strong-deflection singularity into a double logarithm: at
	\(s=\tfrac14\) that singularity remains the separate \(I_1\) fiber at
	\(t=\tfrac14\).  A relative chain probing only one of the two nodes in
	\eqref{eq:I2-factorization} has the ordinary local logarithm; the doubled
	monodromy records the sum of the two nodal contributions.
	
	\begin{prop}
		\label{prop:II-collision}
		At
		\begin{equation}
			(s_{\rm AD},t_{\rm AD})
			=\left(\frac9{32},\frac8{27}\right),
			\label{eq:canonical-AD-st}
		\end{equation}
		two \(I_1\) fibers with mutually non-local vanishing cycles collide to
		form a Kodaira \(II\) fiber.
	\end{prop}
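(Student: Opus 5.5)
The plan is to locate the collision from the explicit branches \eqref{eq:tcpm}, then identify the fiber type by Tate's algorithm applied to the Weierstrass model \eqref{eq:weier}, and finally read off non-locality from the monodromy. Put \(s=9/32\) in the parametrization \(s=(9-S^2)/32\). This forces \(S=0\), and then \eqref{eq:tcpm} gives
\[
t_c^{(+)}=\frac{8\cdot 1}{3^3}=\frac{8}{27},
\qquad
t_c^{(-)}=\frac{8\cdot(-1)}{(-3)^3}=\frac{8}{27}.
\]
So the two photon-sphere \(I_1\) fibers, which are distinct for \(S\ne0\) near \(0\), merge at \(t_{\rm AD}=8/27\). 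On the quartic side, \(x_c^{(\pm)}=\pm4/(S\pm3)\to 4/3\). I would then check directly that \(q_{9/32,8/27}\) has a triple root at \(x=4/3\), namely \(-\tfrac{9}{32}(x-\tfrac43)^3(x-r)\) for a fourth root \(r\). This is the expected quartic picture of a cusp.

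Next comes the Kodaira type. Substituting into \eqref{eq:weier}, I expect
\[
g_2=\tfrac43-16\cdot\tfrac{9}{32}\cdot\tfrac{8}{27}=0,
\qquad
g_3=\tfrac{32}{3}\cdot\tfrac{9}{32}\cdot\tfrac{8}{27}+\tfrac{8}{27}-\tfrac{32}{27}=0
\]
at \((s_{\rm AD},t_{\rm AD})\). At fixed \(s=9/32\) both \(g_2\) and \(g_3\) are linear in \(t\). Their \(t\)-derivatives are \(-16s=-9/2\ne0\) and \(\tfrac{32}{3}s-4=-1\ne0\). Hence \(\operatorname{ord}_{t_{\rm AD}}g_2=1\) and \(\operatorname{ord}_{t_{\rm AD}}g_3=1\). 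It follows that \(\Delta=g_2^3-27g_3^2\) vanishes to order exactly \(2\), the \(g_3^2\) term dominating. Tate's table gives \((\operatorname{ord}g_2,\operatorname{ord}g_3,\operatorname{ord}\Delta)=(\ge1,1,2)\), which is type \(II\), and the model is minimal there. The Euler number \(e(II)=2=1+1\) matches the two \(I_1\) fibers absorbed, and the fibers at \(t=0\) and \(t=\infty\) are untouched.

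For mutual non-locality, let \(\gamma_\pm\) be the vanishing cycles of the two \(I_1\) fibers for \(S\) near \(0\). The local monodromy around a small loop enclosing both points \(t_c^{(\pm)}\) is the product of their Picard--Lefschetz transvections. By Proposition~\ref{prop:isomonodromy} and continuity of the local system, this product equals the monodromy of the merged fiber, which for type \(II\) is conjugate to the order-\(6\) elliptic element \(\left(\begin{smallmatrix}1&1\\-1&0\end{smallmatrix}\right)\). If \(\gamma_+\cdot\gamma_-=0\), the two cycles would be proportional and the product would be conjugate to the parabolic element \(T^2\), giving an \(I_2\) fiber as at \(s=1/4\) in Proposition~\ref{prop:I2-collision}. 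Since the product is elliptic instead, \(\gamma_+\cdot\gamma_-=\pm1\) and the vanishing cycles are mutually non-local.

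The main obstacle is this last step, specifically justifying that the monodromy of the merged fiber is the limit of the product of the nearby \(I_1\) twists. I would handle it by working in a small disc in \(t\) that contains both \(t_c^{(\pm)}\) for all small \(S\), and in which no other singular fiber enters. Over that disc the local system on the boundary circle is locally constant in \(S\), so the boundary monodromy is independent of \(S\). The remaining algebra is routine.
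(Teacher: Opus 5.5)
Your proof is correct, but it reaches the conclusion by a different route from the paper. The paper supports this proposition with the factorization $q_{9/32,8/27}(x)=-\tfrac{9}{32}(x-\tfrac43)^3(x+\tfrac49)$, i.e.\ a triple root of the photon quartic, and with the quasi-homogeneous rescaling $x-x_c=\epsilon^{1/3}X$ in \eqref{eq:IItype}. The mutual non-locality is then only asserted.

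You instead read the Kodaira type off the Weierstrass data. On the slice $s=9/32$, $g_2$ and $g_3$ each vanish to order exactly one at $t=8/27$, so $\operatorname{ord}\Delta=2$ and the table gives type $II$; the Euler count $9+1+2=12$ is a consistency check. This is the more decisive criterion. A triple root at $\epsilon=0$ only shows that the central fiber is cuspidal, and the same central fiber occurs in the type $IV$ family $y^2=x^3+\epsilon^2$. What forces type $II$ is that $t$ enters linearly, $q=-\tfrac12(x-\tfrac43)^3-\epsilon-\tfrac{9}{32}(x-\tfrac43)^4$. The paper's rescaling uses this only implicitly; your order count makes it explicit. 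Your monodromy argument also supplies the non-locality that the paper does not derive.

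Two small corrections:
\begin{itemize}
\item Your dichotomy (parabolic $T^2$ versus elliptic) only yields $\gamma_+\cdot\gamma_-\neq0$. That is all the statement needs. To get $\pm1$, use $\operatorname{tr}(T_{\gamma_+}T_{\gamma_-})=2-(\gamma_+\cdot\gamma_-)^2$, which must equal $1$ for the order-$6$ monodromy of a type $II$ fiber.
\item Proposition~\ref{prop:isomonodromy} is stated away from collision values. It is therefore your boundary-circle continuity argument, not that proposition, that identifies the product of the two Picard--Lefschetz twists with the type $II$ monodromy.
\end{itemize}
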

	
	At this point
	\begin{equation}
		q_{9/32,\,8/27}(x)
		=
		-\frac9{32}
		\left(x-\frac43\right)^3
		\left(x+\frac49\right).
		\label{eq:II-factorization}
	\end{equation}
	Thus \(x_c=4/3\) is a triple root.  With
	\(\epsilon=t_{\rm AD}-t\) and
	\(x-x_c=\epsilon^{1/3}X\), the relative holomorphic differential and the
	quartic Seiberg--Witten differential scale as
	\begin{equation}
		\frac{\dd x}{y}
		\sim
		\epsilon^{-1/6}
		\frac{\dd X}{\sqrt{c_3X^3-1}},
		\qquad
		\oint_{\gamma_{\rm van}}\lambda_C
		\sim\epsilon^{5/6},
		\label{eq:IItype}
	\end{equation}
	where \(c_3\ne0\).  The first relation gives the fractional
	\(\epsilon^{-1/6}\) divergence of a lensing relative period that reaches
	the critical region; the second is the vanishing Seiberg--Witten central
	charge.  These are two aspects of the same quasi-homogeneous scaling,
	because \(\partial_t\lambda_C=\dd x/y\).  The lensing exponent agrees
	with the direct marginal-photon-sphere analysis of~\cite{Sasaki2025}.
	
	The distinction is summarized in Table~\ref{tab:fiber-collisions}.  The
	\(I_2\) point has mutually local vanishing cycles and remains
	semistable.  At the type \(II\) point the cycles have nonzero symplectic
	intersection, so mutually non-local BPS states become massless and the
	rank-one theory reaches an Argyres--Douglas fixed point
	\cite{ArgyresDouglas1995,ArgyresLotitoLuMartone2016}.
	
	\begin{table}[ht]
		\centering
		\small
		\begin{tabular}{@{}lll@{}}
			\toprule
			fiber & geometry and monodromy & lensing behavior \\
			\midrule
			\(I_1\) & one node; \(T\) & one logarithm \\
			\(I_2\) & two nodes; \(T^2\), mutually local
			& chain-dependent sum of logarithms \\
			\(II\) & one cusp; mutually non-local collision
			& \((t_{\rm AD}-t)^{-1/6}\) \\
			\bottomrule
		\end{tabular}
		\caption{The multiplicative (\(I_1\), \(I_2\)) and additive (\(II\)) degenerations
			relevant to the photon curve.}
		\label{tab:fiber-collisions}
	\end{table}
	
	The gravitational critical point can be stated invariantly.
	
	\begin{prop}[Degenerate photon orbit]
		\label{prop:degenerate-photon-orbit}
		A generic circular photon orbit is a double root
		\(q=q_x=0\), \(q_{xx}\ne0\), and gives the logarithmic
		strong-deflection law.  At~\eqref{eq:canonical-AD-st}, we have
		\[
		q=q_x=q_{xx}=0,\qquad q_{xxx}\ne0.
		\]
		The quadratic radial instability, and hence the corresponding Lyapunov
		exponent, vanishes; the local lensing scaling is instead the fractional
		power in~\eqref{eq:IItype}.
	\end{prop}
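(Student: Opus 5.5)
The plan is to argue directly from the photon quartic \(q_{s,t}(x)=t-x^2+x^3-sx^4\) and its \(x\)-derivatives, and then read off the dynamical statements from the orbit equation \eqref{eq:photon-polynomial-general}.

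\textbf{Step 1: circular orbits are double roots.} In the variable \(x=2MU\), write the orbit equation as \((\dd x/\dd\phi)^2=q(x)\). Differentiating in \(\phi\) gives \(\dd^2x/\dd\phi^2=\tfrac12 q_x(x)\). A circular orbit \(x\equiv x_c\) therefore requires \(q(x_c)=0\) and \(q_x(x_c)=0\), that is, a double root. Genericity means \(q_{xx}(x_c)\ne0\); in the physical case the double root is the maximum of the effective potential, so \(A=\tfrac12 q_{xx}>0\). Linearizing with \(x=x_c+\delta x\) gives
\[
\frac{\dd^2\delta x}{\dd\phi^2}=A\,\delta x .
\]
This yields exponential growth \(e^{\sqrt A\,\phi}\), whose rate is the Lyapunov exponent per unit angle (cf.~\cite{CardosoEtAl2009}). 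The logarithmic strong-deflection law is then exactly Proposition~\ref{prop:twoexpansions}(ii), as instantiated in \eqref{eq:strongfield}.

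\textbf{Step 2: verify the degenerate conditions at \eqref{eq:canonical-AD-st}.} I would use the factorization \eqref{eq:II-factorization}: \(q=-\tfrac9{32}(x-\tfrac43)^3(x+\tfrac49)\) has a triple root at \(x_c=4/3\). Hence \(q=q_x=q_{xx}=0\) there, and
\[
q_{xxx}(x_c)=-\tfrac{9}{32}\cdot 6\cdot\bigl(\tfrac43+\tfrac49\bigr)=-3\ne0 .
\]
As a cross-check, I would solve the system \(q=q_x=q_{xx}=0\) directly. The conditions \(q_{xx}=-2+6x-12sx^2=0\) and \(q_x=x(-2+3x-4sx^2)=0\), with \(x\ne0\), combine to give \(3x-4=0\), hence \(x=4/3\), \(s=9/32\), and \(t=x^2-x^3+sx^4=8/27\). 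This shows the degeneracy occurs only at this point, consistent with \(S=0\) in \eqref{eq:tcpm}, where \(t_c^{(\pm)}\) coincide.

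\textbf{Step 3: read off the vanishing instability and the new scaling.} At this point \(A=\tfrac12q_{xx}=0\), so the linearized equation becomes \(\delta x''=0\). The quadratic instability, and with it the Lyapunov exponent \(\sqrt A\), vanishes. The leading local behaviour is now \(q\approx \tfrac16 q_{xxx}(x-x_c)^3-\epsilon\), with \(\epsilon=t_{\rm AD}-t\) (note \(q_t=1\)). Rescaling \(x-x_c=\epsilon^{1/3}X\) reproduces \eqref{eq:IItype}: \(\dd x/y\sim\epsilon^{1/3-1/2}\,\dd X/\sqrt{c_3X^3-1}\) with \(c_3=q_{xxx}/6\ne0\), which gives the \(\epsilon^{-1/6}\) law in place of the logarithm.

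\textbf{Main obstacle.} The delicate point is justifying that the relative period is dominated by this rescaled region, i.e.\ that the tail contributes only \(O(1)\). I would split the integral at a fixed cutoff \(\delta\), exactly as in \eqref{eq:strongfield}. The outer part is uniformly bounded. For the inner part, the substitution converts it to \(\epsilon^{-1/6}\) times an integral over \(X\) whose upper limit grows like \(\delta\epsilon^{-1/3}\); that integral converges because the integrand decays like \(X^{-3/2}\).
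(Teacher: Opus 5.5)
Your proposal is correct and takes essentially the same route as the paper. The factorization \eqref{eq:II-factorization} gives $q=q_x=q_{xx}=0$ with $q_{xxx}(4/3)=-3\ne0$; the vanishing of $A=\tfrac12 q_{xx}$ removes the linear instability and hence the Lyapunov exponent, as in \cite{CardosoEtAl2009}; and the rescaling $x-x_c=\epsilon^{1/3}X$ yields \eqref{eq:IItype}, while your Binet linearization, the uniqueness check via $3x-4=0$, and the cutoff estimate make explicit what the paper leaves to these displayed formulas and the citation. In the cutoff estimate, you should also absorb the quartic term $-\tfrac{9}{32}(x-x_c)^4$ in the inner region, for example by a near-identity change of variable, before dropping it; this does not affect the $\epsilon^{-1/6}$ conclusion.
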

	
	The relation between circular null orbits and their Lyapunov exponents is
	standard~\cite{CardosoEtAl2009}.  For the asymptotically flat
	Reissner--Nordstr\"om spacetime, however, \(s=9/32>1/4\) lies beyond the
	black-hole charge bound.  Section~\ref{sec:Kiselev_horizons_AD} explains
	precisely when the same algebraic degeneration coincides with an
	ultracold horizon configuration.
	\par Comparing~\eqref{eq:E7dict} with the gravitational parameter $s=Q^2/4M^2$ gives
	\begin{equation}
		\tan^2m=-\frac{Q^2}{M^2}.
		\label{eq:tanmQM}
	\end{equation}
	For real gravitational charge the right-hand side of
	\eqref{eq:tanmQM} is negative, so the corresponding Wilson-line parameter
	is reached by analytic continuation.  The base coordinates are related
	algebraically by \(u=128\cos^2m/t\).  This inversion identifies the two
	families and their singular fibers; it should not be interpreted as a
	renormalization-group trajectory, since a Coulomb-branch modulus is a
	vacuum expectation value rather than an RG scale.  Likewise, the
	isomonodromic deformation in \(s\) is a statement about the
	Gauss--Manin connection, not about running of the MNW mass.
	Distinguished mass values reproduce the special rational elliptic
	surfaces found in Section~\ref{sec:surface}
	\cite{Herfurtner1991,Doran2001}; their configurations are
	listed in Table~\ref{tab:E7-special-values}.  In particular, the fiber over
	\(u=0\) is generically of type \(III^*\), and enhances to \(II^*\) at
	\(m=0\).
	
	\begin{table}[ht]
		\centering
		\begin{tabular}{@{}lll@{}}
			\toprule
			\(m\) & \((\sigma,s)\) & singular fibers and Mordell--Weil group \\
			\midrule
			\(0\)
			& \((0,0)\)
			& \(II^*+2I_1\), trivial \({\rm MW}\) \\
			\(i\infty\)
			& \((2/3,\,1/4)\)
			& \(III^*+I_2+I_1\), \({\rm MW}\cong\mathbb Z/2\mathbb Z\) \\
			\(\pm\pi/2\pm i\operatorname{arsinh}(2\sqrt2)\)
			& \((1,\,9/32)\)
			& \(III^*+II+I_1\), \({\rm MW}\cong A_1^*\) \\
			\bottomrule
		\end{tabular}
		\caption{Distinguished equal-mass members of the non-critical \(E_7\)
			family.}
		\label{tab:E7-special-values}
	\end{table}
	
	A multiple zero of the discriminant is not by itself sufficient to
	identify an Argyres--Douglas theory: the \(I_2\) collision at
	\(s=\tfrac14\) is multiplicative and its vanishing charges are mutually
	local.  The type \(II\) collision at
	\(s=\tfrac9{32}\), by contrast, is additive and is produced by mutually
	non-local cycles.  It therefore gives the Argyres--Douglas point of this
	one-parameter equal-mass slice
	\cite{ArgyresDouglas1995,ArgyresLotitoLuMartone2016}.  Under
	\eqref{eq:E7dict},
	\begin{equation}
		(s,t)=\left(\frac9{32},\frac8{27}\right)
		\quad\longleftrightarrow\quad
		u_{\rm AD}=-3456 .
		\label{eq:E7-AD-coordinate}
	\end{equation}
	The quasi-homogeneous scaling is precisely
	\eqref{eq:IItype}: the period of \(\lambda_C\) on the vanishing charge
	goes to zero as \((t_{\rm AD}-t)^{5/6}\), while its \(t\)-derivative,
	the relative holomorphic period governing the deflection angle, scales as
	\((t_{\rm AD}-t)^{-1/6}\).
	\section{The Kiselev deformation and the AD locus}
	\label{sec:Kiselev_horizons_AD}
	
	We now restore the linear Kiselev parameter \(a\) and the cosmological
	parameter \(\ell\) of~\eqref{eq:dimensionless}.  The name
	\emph{Kiselev deformation} is used in the sense explained in
	Section~\ref{sec:introduction}; see also~\cite{Kiselev2003,Visser2020}.
	Proposition~\ref{prop:linear-term-removal} gives a reduction to the family with no linear term.  Conversely, 
	we can take the point~\eqref{eq:E7-AD-coordinate} and obtain the
	corresponding point on the curve~\eqref{eq:photon-curve_quintessence}.
	\begin{prop}
		\label{prop:Kiselev-AD-locus}
		Away from the exceptional normalization locus, the inverse image of
		\[
		(\widetilde s,\widetilde t)
		=\left(\frac9{32},\frac8{27}\right)
		\]
		is the cuspidal locus, parametrized by \(R\in\mathbb C^\times\),
		\begin{equation}
			a_{\rm AD}=\frac{4R-3}{3R^2},
			\qquad
			s_{\rm AD}=\frac{R(3-R)}6,
			\qquad
			t_{\rm AD}=\frac{1-R}{2R^3},
			\qquad
			x_{\rm AD}=\frac1R.
			\label{eq:generalized-ultracold-dimensionless-AD}
		\end{equation}
		For generic \(R\) and  \((s,a)=(s_{\rm AD},a_{\rm AD})\) fixed, the one-parameter family has a
		Kodaira \(II\) fiber at \(t=t_{\rm AD}\).  At \(R=1\), the fiber remains of type \(II\), whereas at \(R=\tfrac32\)
		the fiber is of type \(III\). Moreover, for fixed parameter $a$ the Gauss--Manin connection of the family~\eqref{eq:photon-curve_quintessence}
		is obtained by pullback from~\eqref{eq:PFhom} and~\eqref{eq:GMeqn}, and
		the periods are related as follows:
		\begin{equation}
			\Pi_a(t;s,a)
			=
			c(t)\,
			\widetilde\Pi\bigl(\widetilde t(t),\widetilde s(t)\bigr),
			\qquad
			c(t)=\sqrt{\frac{8t}{D}}.
			\label{eq:period-pullback}
		\end{equation}
	\end{prop}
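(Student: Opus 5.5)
The plan is to characterize the cusp locus directly on the Kiselev quartic \(q_{s,t,a}\), and only afterwards match it with the point \((\widetilde s,\widetilde t)=(9/32,8/27)\) through Proposition~\ref{prop:linear-term-removal}.

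\emph{Step 1: uniqueness of the cusp in the normalized family.} On the normalized Weierstrass model~\eqref{eq:weier}, a type \(II\) fiber at finite \(t\) requires \(g_2=g_3=0\). The condition \(g_2=0\) forces \(st=1/12\). Substituting into \(g_3=0\) gives \(8/9+8/27-4t=0\), hence \(t=8/27\) and \(s=9/32\). So~\eqref{eq:canonical-AD-st} is the only cuspidal point of the two-parameter family \(C_{s,t}\).

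\emph{Step 2: identifying the preimage with the triple-root locus.} On the open set \(tDN\ne0\), the map~\eqref{eq:linear-removal-map} is a fiberwise birational isomorphism of genus-one quartics. The binary-quartic invariants \((I,J)\), equivalently the Jacobian's \((g_2,g_3)\), therefore transform by a nonzero weight factor. Hence the preimage of \((9/32,8/27)\) is exactly the locus where \(I(q_{s,t,a})=J(q_{s,t,a})=0\). For a quartic whose leading coefficient is nonzero, this is the locus where \(q_{s,t,a}\) has a root of multiplicity at least three.

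\emph{Step 3: solving the triple-root condition.} I write
\[
q_{s,t,a}(x)=-s(x-c)^3(x-e)
\]
and compare coefficients with \(t+ax-x^2+x^3-sx^4\):
\[
s(3c+e)=1,\qquad 3sc(c+e)=1,\qquad sc^2(c+3e)=a,\qquad -sc^3e=t.
\]
The first two equations give \(e=3c(c-1)/(1-3c)\). Putting \(c=1/R\) then yields, by short algebra, \(s=R(3-R)/6\), \(t=(1-R)/(2R^3)\), \(a=(4R-3)/(3R^2)\) and \(x_{\rm AD}=1/R\), which is~\eqref{eq:generalized-ultracold-dimensionless-AD}. The values excluded by \(R\in\mathbb C^\times\) and by the degenerate cases \(s=0\) or \(R=3\) are absorbed into the exceptional normalization locus \(tDN=0\).

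\emph{Step 4: fiber types via vanishing orders.} With \((s,a)=(s_{\rm AD},a_{\rm AD})\) fixed, I view \(I,J\) as polynomials in \(t\) and compute their orders at \(t=t_{\rm AD}\). Kodaira's table (Tate's algorithm) then reads off the fiber: a type \(II\) fiber corresponds to \(\operatorname{ord}g_2\ge1\), \(\operatorname{ord}g_3=1\), \(\operatorname{ord}\Delta=2\); a type \(III\) fiber to \(\operatorname{ord}g_2=1\), \(\operatorname{ord}g_3\ge2\), \(\operatorname{ord}\Delta=3\).

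\begin{itemize}
\item \emph{Generic \(R\).} Here \(J\) vanishes simply in \(t\), so the fiber is of type \(II\).
\item \(R=\tfrac32\). Then \(e=c=2/3\), so the quartic becomes a fourth power \(-s(x-c)^4\). A perturbation in \(t\) alone then has \(J\) vanishing to order two while \(I\) vanishes simply, giving type \(III\).
\item \(R=1\). Then \(t_{\rm AD}=0\), which lies on the excluded set \(t=0\). This case must be checked directly on the quartic invariants rather than through Proposition~\ref{prop:linear-term-removal}; I expect it to remain of type \(II\).
\end{itemize}

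I expect these special values, and in particular controlling the exceptional locus at \(R=1\), to be the main obstacle.

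\emph{Step 5: the period pullback.} By~\eqref{eq:tilde-differential}, \(\dd x/y=c(t)\,\dd\widetilde x/\widetilde y\) with \(c(t)=\sqrt{8t/D}\). Integrating over transported cycles gives~\eqref{eq:period-pullback}. Substituting \(\widetilde\Pi=\Pi_a/c\) into~\eqref{eq:PFhom} and~\eqref{eq:GMeqn}, and using the chain rule along the curve \(t\mapsto(\widetilde s(t),\widetilde t(t))\) from~\eqref{eq:tilde-parameters}, produces the pulled-back Gauss--Manin connection at fixed \(a\).
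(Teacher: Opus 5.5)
Your proof is essentially the paper's. The paper also uses the binary-quartic invariants $\mathcal I=1-3a-12st$ and $\mathcal J=2-9a+27sa^2+(72s-27)t$, their rescaling by $(8t/D)^2$ and $(8t/D)^3$ under the linear-removal map, the triple-root condition with the root at $x=1/R$, and transport of cycles plus the chain rule $\dot{\widetilde t}\,\partial_{\widetilde t}+\dot{\widetilde s}\,\partial_{\widetilde s}$ for the period pullback. In two places you are more careful than the paper. Your Step~1 upgrades the paper's ``generic inverse image'' to an exact one. Your vanishing-order check of $g_2,g_3,\Delta$ in $t$ actually pins down the Kodaira type, whereas the paper only infers type $II$ from $q_{xxx}\ne0$.

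Three slips need fixing, one of which leaves a claimed case unproved.

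\emph{$R=1$ (unproved but easy).} You leave this case as an expectation, but it is not an obstacle. $\mathcal I$ and $\mathcal J$ are computed on $q_{s,t,a}$ itself, not through the map \eqref{eq:linear-removal-map}, so your generic argument applies verbatim. The simple-zero condition for $\mathcal J$ in $t$ is $72s_{\rm AD}\ne27$, which fails only at $R=3/2$. Concretely, at $R=1$ you get $\mathcal I=-4t$ and $\mathcal J=-3t$, so $\operatorname{ord}_0\Delta=2$ and the fiber is of type $II$.

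\emph{$R=3$ (incorrect remark).} This point is not absorbed into $tDN=0$. There $(a,s,t)=(1/3,0,-1/27)$, $D=1/27\ne0$, $N=-1/729\ne0$, and $q=(x-1/3)^3$ has its fourth root at infinity. It does lie in the inverse image, since it maps to $(9/32,8/27)$, and the stated parametrization covers it. Your coefficient matching with leading coefficient $-s$ therefore has to be supplemented at this point rather than discarded. Along the locus one has $D=(2R-3)^2/(3R^4)$ and $N=-(2R-3)^3/(27R^6)$. Hence the exceptional normalization locus meets the cusp curve only at $R=1$ (where $t=0$) and $R=3/2$ (where $D=N=0$).

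\emph{$R=3/2$ (imprecise, conclusion unaffected).} Here $\mathcal J$ does not vanish to order two; it vanishes identically in $t$, because both $72s-27$ and the constant term are zero. Your criterion only requires $\operatorname{ord}g_3\ge2$, so the type $III$ conclusion still holds.
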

	\begin{proof}
		For the curve~\eqref{eq:photon-curve_quintessence} one has the binary-quartic invariants~\cite{Fisher2008}
		\begin{equation}
			\mathcal I=1-3a-12st,
			\qquad
			\mathcal J=2-9a+27sa^2+(72s-27)t.
			\label{eq:Kiselev-quartic-invariants}
		\end{equation}
		The Jacobian curve has
		\begin{equation}
			g_2^{(a)}=\frac43\mathcal I,
			\qquad
			g_3^{(a)}=\frac4{27}\mathcal J,
			\qquad
			\Delta^{(a)}
			=\frac{16}{27}\bigl(4\mathcal I^3-\mathcal J^2\bigr).
			\label{eq:Kiselev-Weierstrass-invariants}
		\end{equation}
		Under~\eqref{eq:linear-removal-map},
		\begin{equation}
			\widetilde{\mathcal I}
			=\left(\frac{8t}{D}\right)^2\mathcal I,
			\qquad
			\widetilde{\mathcal J}
			=\left(\frac{8t}{D}\right)^3\mathcal J.
			\label{eq:tilde-invariant-scaling}
		\end{equation}
		Equations~\eqref{eq:Kiselev-quartic-invariants}--\eqref{eq:tilde-invariant-scaling}
		both verify the formulas in~\eqref{eq:tilde-parameters} and show how the
		normalized AD point can be pulled back. At the normalized point both quartic invariants vanish.  By
		\eqref{eq:tilde-invariant-scaling}, its generic inverse image is
		\(\mathcal I=\mathcal J=0\).  Equivalently, imposing
		\(q=q_x=q_{xx}=0\) and writing the triple root as \(x=1/R\) gives
		\eqref{eq:generalized-ultracold-dimensionless-AD}.  Since
		\(q_{xxx}\ne0\) generically, the fiber is cuspidal and of type \(II\).
		At \(R=1\), \(t_{\rm AD}=0\) and the
		fractional-linear map itself is obtained only as a limit, although the
		fiber remains of type \(II\).  At
		\begin{equation}
			R=\frac32,
			\qquad
			(a,s,t)=\left(\frac49,\frac38,-\frac2{27}\right),
			\label{eq:exceptional-type-III}
		\end{equation}
		one has \(D=N=0\) and
		\(q=-\tfrac38(x-\tfrac23)^4\).  This is the exceptional type \(III\)
		member of the closure of the locus, not a generic AD cusp.
		\par If the cycles are transported by
		\eqref{eq:linear-removal-map}, then
		one obtains~\eqref{eq:period-pullback}. Using the no-\(a\) Gauss--Manin equation~\eqref{eq:GMeqn} gives the
		explicit pullback rule
		\begin{equation}
			\frac{\dd\Pi_a}{\dd t}
			=
			\left(
			\frac{\dot c}{c}
			-\dot{\widetilde s}\,
			B(\widetilde t,\widetilde s)
			\right)\Pi_a
			+
			c\left(
			\dot{\widetilde t}
			-\dot{\widetilde s}\,
			A(\widetilde t,\widetilde s)
			\right)
			\partial_{\widetilde t}\widetilde\Pi .
			\label{eq:Kiselev-GM-pullback}
		\end{equation}
		Here a dot denotes \(\dd/\dd t\) at fixed \((s,a)\), and the functions
		\(A,B\) are those in~\eqref{eq:ABcoeffs} evaluated at
		\((\widetilde t,\widetilde s)\).
		A second derivative, followed by \eqref{eq:PFhom}, produces the scalar
		Picard--Fuchs equation along the fixed-\((s,a)\) family wherever the
		pulled-back connection basis is nondegenerate.  
		One must not instead hold \(\widetilde s\) fixed: at fixed physical
		\((s,a)\),
		\begin{equation}
			\left.\frac{\dd}{\dd t}\right|_{s,a}
			=
			\dot{\widetilde t}\,\partial_{\widetilde t}
			+\dot{\widetilde s}\,\partial_{\widetilde s}.
			\label{eq:Kiselev-chain-rule}
		\end{equation}
		Thus the two-variable
		\(a=0\) Gauss--Manin system analyzes arbitrary-\(a\) periods by pullback.
	\end{proof}
	The fractional-linear change of \(x\) in Proposition~\ref{prop:Kiselev-AD-locus} need not preserve the positive real radial interval, so horizon
	ordering is best analyzed in the original variables.
	\subsection{Horizon bounds}
	With \(R=r/(2M)=1/x\), positive horizons are the positive roots of
	\begin{equation}
		P_{s,a,\ell}(R)
		=
		R^2 f(2MR)
		=
		s-R+R^2-aR^3-\ell R^4.
		\label{eq:horizon-polynomial-R}
	\end{equation}
	This polynomial is independent of the photon energy and angular momentum.
	A double horizon at \(R_0>0\) satisfies \(P=P'=0\), or
	\begin{equation}
		D_{a,\ell}(R_0)
		:=
		4\ell R_0^3+3aR_0^2-2R_0+1=0,
		\qquad
		s=s_{\rm ext}(R_0)
		:=\frac{R_0}{4}\bigl(3-2R_0+aR_0^2\bigr).
		\label{eq:extremal-horizon-parametrization}
	\end{equation}
	For \(a\ge0\) and \(\ell>0\), the cubic \(D_{a,\ell}\) has two
	positive critical radii \(R_-<R_+\) precisely when
	\begin{equation}
		d(a,\ell)
		=
		36a^2-108a^3+128\ell
		-432a\ell-432\ell^2>0.
		\label{eq:critical-cubic-discriminant}
	\end{equation}
	Here \(R_-\) is a local minimum and \(R_+\) a local maximum of \(P\).
	Consequently the three-horizon black-hole region is
	\begin{equation}
		\max\{0,s_{\rm ext}(R_+)\}
		<s<s_{\rm ext}(R_-).
		\label{eq:exact-black-hole-window}
	\end{equation}
	Equality at \(s=s_{\rm ext}(R_+)\) or
	\(s=s_{\rm ext}(R_-)\) gives the Nariai or cold boundary,
	respectively; the additional cutoff at \(s=0\) simply enforces
	nonnegative \(Q^2\).
	These are the standard two extremal branches of charged de~Sitter black
	holes~\cite{Romans1992,CastroMarianiToldo2023}.
	
	The intersection of the two branches is the ultracold locus, on which \(P\) has a triple
	positive root.  Solving \(P=P'=P''=0\) gives
	\begin{equation}
		a_{\rm uc}=\frac{4R_{\rm uc}-3}{3R_{\rm uc}^2},
		\qquad
		s_{\rm uc}=\frac{R_{\rm uc}(3-R_{\rm uc})}{6},
		\qquad
		\ell_{\rm uc}=\frac{1-R_{\rm uc}}{2R_{\rm uc}^3}.
		\label{eq:generalized-ultracold-dimensionless}
	\end{equation}
	The conditions \(a,s,\ell\ge0\) restrict
	\(3/4\le R_{\rm uc}\le1\).  At \(a=0\), one obtains the familiar
	ultracold Reissner--Nordstr\"om--de~Sitter point
	\begin{equation}
		R_{\rm uc}=\frac34,
		\qquad
		s_{\rm uc}=\frac9{32},
		\qquad
		\ell_{\rm uc}=\frac8{27}.
		\label{eq:RN-dS-ultracold-point}
	\end{equation}
	
	For completeness, at \(\ell=0\) and \(0<a<1/3\), put
	\(\delta=\sqrt{1-3a}\).  The two critical radii and the exact charge
	window reduce to
	\begin{equation}
		R_-=\frac1{1+\delta},
		\qquad
		R_+=\frac1{1-\delta},
		\qquad
		\max\left\{0,\frac{R_+(2-R_+)}3\right\}
		\le s\le
		\frac{R_-(2-R_-)}3.
		\label{eq:zero-Lambda-exact-window}
	\end{equation}
	Hence \(s\le1/4\) is the \(a=0\) Reissner--Nordstr\"om bound, not a
	universal bound after the Kiselev deformation.
	\par The agreement of the first two formulas in
	\eqref{eq:generalized-ultracold-dimensionless-AD} and
	\eqref{eq:generalized-ultracold-dimensionless} has an exact origin, namely
	\begin{equation}
		R^4q_{s,t,a}(1/R)
		=
		(t-\ell)R^4-P_{s,a,\ell}(R),
		\qquad
		t-\ell=\frac{4M^2E^2}{L^2}.
		\label{eq:reciprocal-photon-horizon-identity}
	\end{equation}
	We have the following:
	
	\begin{prop}
		\label{prop:zero-energy-intersection}
		Assume that we take the same real parameter \(R\) on the Kiselev Argyres--Douglas
		locus and on the physical ultracold branch, and fix \(L\ne0\).  Then the two parametrizations define the same
		point of the photon--horizon parameter space if and only if the Killing energy vanishes, \(E=0\), that is, \(t=\ell\).
	\end{prop}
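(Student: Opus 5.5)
The plan is to compare the two parametrizations coordinate by coordinate. On the Kiselev Argyres--Douglas locus \eqref{eq:generalized-ultracold-dimensionless-AD}, a point of the photon--horizon parameter space is the triple \((a,s,t)\) with
\[
a=a_{\rm AD}(R),\qquad s=s_{\rm AD}(R),\qquad t=t_{\rm AD}(R)=\frac{1-R}{2R^3}.
\]
On the ultracold branch \eqref{eq:generalized-ultracold-dimensionless}, the same \(R\) gives \((a,s,\ell)\) with \(a=a_{\rm uc}(R)\), \(s=s_{\rm uc}(R)\) and \(\ell=\ell_{\rm uc}(R)=(1-R)/(2R^3)\).

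The first two coordinates agree identically in \(R\). The origin of this agreement is the reciprocal identity \eqref{eq:reciprocal-photon-horizon-identity}, \(R^4q_{s,t,a}(1/R)=(t-\ell)R^4-P_{s,a,\ell}(R)\). I will redo that comparison explicitly rather than just cite it, so that the only remaining coordinate is the one linking \(t\) and \(\ell\). The two parametrizations therefore define the same point exactly when the photon parameter \(t\) of the AD cusp and the cosmological parameter \(\ell\) of the ultracold configuration are compatible, which means \(t_{\rm AD}(R)=t\) and \(\ell_{\rm uc}(R)=\ell\) simultaneously.

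The key step is the observation that \(t_{\rm AD}(R)=\ell_{\rm uc}(R)\) as functions of \(R\). Hence a common point forces \(t=\ell\). By \eqref{eq:dimensionless}, \(t-\ell=4M^2E^2/L^2\). With \(L\neq0\) fixed this vanishes iff \(E=0\), which gives the forward direction.

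For the converse, assume \(E=0\), so \(t=\ell\). Identity \eqref{eq:reciprocal-photon-horizon-identity} then reads \(R^4q_{s,t,a}(1/R)=-P_{s,a,\ell}(R)\). Under \(x=1/R\), a triple root of \(q\) at \(x=1/R\) is the same condition as a triple root of \(P\) at \(R\); the factor \(R^4\) is harmless because \(R\in\mathbb C^\times\). So the conditions \(q=q_x=q_{xx}=0\) defining the AD cusp in the proof of Proposition~\ref{prop:Kiselev-AD-locus} coincide with the conditions \(P=P'=P''=0\) defining the ultracold locus. The parametrized points then coincide, with \(t_{\rm AD}=\ell_{\rm uc}\).

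The main subtlety is logical rather than computational. When \(t\ne\ell\), the triple-root conditions on \(q\) and on \(P\) differ by the extra term \((t-\ell)R^4\). I need to show this genuinely obstructs a common point for the same \(R\), rather than permitting some coincidental solution. I will argue that the matching of \(a\) and \(s\) already pins down the AD point to be \(R\) itself, since \(R\mapsto(a_{\rm AD},s_{\rm AD})\) is generically locally injective. The AD value of \(t\) is then forced to be \(t_{\rm AD}(R)=\ell_{\rm uc}(R)=\ell\), so no other solution exists. I will also note the restriction to real \(R\) (indeed \(3/4\le R\le1\) for the physical branch) and that \(L\ne0\) is needed to translate \(t=\ell\) into \(E=0\).
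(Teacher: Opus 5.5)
Your proposal is correct and follows the paper's argument. Both use the identical $a(R),s(R)$, the coincidence $t_{\rm AD}(R)=\ell_{\rm uc}(R)$, the relation $t-\ell=4M^2E^2/L^2$ from \eqref{eq:dimensionless}, and the reciprocal identity \eqref{eq:reciprocal-photon-horizon-identity} to match the triple roots. Your injectivity detour in the last paragraph is not needed under the same-$R$ hypothesis; it is nonetheless true, and in fact global: $s(R)=s(3-R)$ together with $a(R)=a(3-R)$ forces $(2R-3)^3=0$.
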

	
	\begin{proof}
		The two loci have identical \(a(R)\) and \(s(R)\), while their remaining
		coordinates are respectively \(t_{\rm AD}(R)\) and
		\(\ell_{\rm uc}(R)\), given by the same rational function of \(R\).
		Equation~\eqref{eq:dimensionless} then gives
		\(t_{\rm AD}-\ell_{\rm uc}=4M^2E^2/L^2\), which vanishes exactly when
		\(E=0\).  The reciprocal identity
		\eqref{eq:reciprocal-photon-horizon-identity} identifies the triple roots.
	\end{proof}
	
	For \(3/4\le R<1\), substituting the common values
	\(a(R),s(R),t(R)=\ell(R)\) into~\eqref{eq:tilde-parameters} gives
	\begin{equation}
		(\widetilde s,\widetilde t)
		=\left(\frac9{32},\frac8{27}\right).
		\label{eq:ultracold-normalizes-to-AD}
	\end{equation}
	Thus the full physical Kiselev ultracold branch, viewed through its
	zero-energy photon curve, normalizes fiberwise to the canonical
	AD cusp with $a=0$.  The endpoint \(R=1\) is obtained by a limit because
	\(t=\ell=0\). 
	For \(a=0\) both loci pass through
	\((s,t,\ell)=(9/32,8/27,8/27)\), but only on the
	zero-Killing-energy boundary.  At fixed nonzero \(L\), this is the
	\(b=L/E\to\infty\) limit and is not a finite-energy scattering orbit.
	Conversely, the asymptotically flat AD curve at
	\((s,t,\ell)=(9/32,8/27,0)\) has finite
	\(b^2=27M^2/2\), but the corresponding Reissner--Nordstr\"om geometry is
	overcharged and has no event horizon.  
	
	\section{Discussion}
	\label{sec:discussion}
	
	The null-geodesic quartic relates the lensing problem to the geometry of a family of
	rational elliptic surfaces.  Its closed periods carry the homogeneous
	Gauss--Manin system, while the finite-distance angular integral is a lift
	of the normal function associated with
	\(D_{\rm lens}=2P_T-P_R-P_S\).  This identifies the geometric content of
	the two differential equations found for Reissner--Nordstr\"om lensing
	in~\cite{Sasaki2024} and separates it from the elementary finite-distance
	endpoint angles.
	
	The inhomogeneous equation has a Seiberg--Witten interpretation.
	The algebraic primitive \(\lambda_C=2y\,\dd x\) satisfies
	\(\dd\lambda_C=\dd t\wedge\dd x/y\), and
	Proposition~\ref{prop:horizontal-SW-gauge} constructs a horizontal gauge
	in which
	\[
	\mathcal L_t\lambda_{\rm h}=-G\,\dd t .
	\]
	Pairing this identity with the lensing chain, with the branch-point
	regularization of
	Proposition~\ref{prop:lensing-GM-system}, produces the inhomogeneous
	source.  Thus the source is not appended to the period geometry: it is
	the relative boundary of the Picard--Fuchs image of a
	Seiberg--Witten primitive.  The comparison with the MNW differential
	then shows exactly what the curve identification does and does not fix:
	the effective coupling is common, whereas the two-form normalization,
	special K\"ahler scale, and mass residues require the factors and
	third-kind correction derived in Section~\ref{sec:SW}.
	
	The two enhanced fibers have distinct meanings.  At \(s=1/4\), the
	\(I_2\) fiber is semistable and has mutually local vanishing cycles; it
	does not double the logarithm at the separate physical strong-field
	\(I_1\) fiber.  At \(s=9/32\), the type \(II\) cusp has mutually
	non-local vanishing cycles.  The Seiberg--Witten period vanishes as
	\(\epsilon^{5/6}\), while the differentiated relative period governing
	lensing scales as \(\epsilon^{-1/6}\).  This is the
	Argyres--Douglas degeneration of the equal-mass \(E_7\) slice, although
	in asymptotically flat Reissner--Nordstr\"om gravity it lies beyond the
	black-hole charge bound.
	
	Finally, the transformation \eqref{eq:tilde-parameters} reduces the quartic
	with arbitrary Kiselev parameter $a$ fiberwise to the family with $a=0$.  The full two-variable
	Gauss--Manin system pulls back by~\eqref{eq:Kiselev-GM-pullback}; a
	fixed-\(\widetilde s\) Picard--Fuchs equation does not.  The horizon
	problem must also retain the original real radial coordinate.  In those
	variables the AD and ultracold parametrizations coincide algebraically,
	but Proposition~\ref{prop:zero-energy-intersection} supplies the missing
	physical condition:
	\[
	t-\ell=\frac{4M^2E^2}{L^2}.
	\]
	At fixed \(L\ne0\), the two loci meet only at zero Killing energy.  The
	ultracold point is therefore a boundary of the photon-curve
	parameter space, not a finite-energy lensing configuration.
	
	It would be useful to extend the relative-cohomology construction to
	rotating black holes, where the separated radial and angular curves
	introduce additional moduli, and to determine whether the pullback method
	for the Kiselev term admits a comparably economical form-level
	Picard--Fuchs description.
	
	\bibliographystyle{unsrt}
	{\small
		\bibliography{References}
	}
	\appendix
	\section{Picard--Fuchs and form-level reduction coefficients}
	\label{app:coefficients}
	
	\subsection{Stiller's method for computing the Picard--Fuchs equation}
	\label{ss-stiller}
	
	Consider a rational elliptic surface $\pi : \mathcal{E} \to \mathbb{P}^1_{(t)}$, whose smooth fibers $E_t := \pi^{-1}(t)$ are represented by a Weierstrass model
	\begin{equation}
		\label{eq:gen_weier}
		E_{t}:\qquad \eta^2=4\xi^3-g_2(t)\,\xi-g_3(t) \, ,
	\end{equation}
	with non-vanishing holomorphic 1-form represented by the algebraic 1-form $\dd\xi/\eta$. For $\mathcal{E}$ a rational elliptic surface, the coefficient functions $g_2(t)$, $g_3(t)$ are polynomial functions of the affine coordinate $t$ of degree at most 4 and 6, respectively. As in \eqref{eq:weier}, these coefficient functions may depend on auxiliary parameters. For a 1-cycle $\gamma$ on the smooth fiber $E_t$, consider the period and the quasi-period
	\begin{equation}
		\label{eq:period_integrals}
		\Pi_\gamma := \oint_\gamma \frac{\dd\xi}{\eta} \, , \qquad
		\mathrm{H}_\gamma := \oint_\gamma \frac{\xi\,\dd\xi}{\eta} \, .
	\end{equation}
	Inherently, both depend on $t \in \mathbb{P}^1_{(t)}$. It is well known that the cycle $\gamma$ may be chosen locally constant in $t$, and after such a choice $\Pi_\gamma = \Pi_\gamma(t)$ and $\mathrm{H}_\gamma = \mathrm{H}_\gamma(t)$ become multivalued holomorphic functions that are defined on the complement of the singular locus of the fibration. Stiller \cite[\S 3]{Stiller1981} shows, following techniques of Griffiths \cite{MR229641}, that they satisfy the differential relations
	\begin{equation}
		\label{eq-elliptic_pf}
		\renewcommand{\arraystretch}{1.3}
		\frac{\dd}{\dd t}\left(\begin{array}{c}
			\Pi_\gamma \\
			\mathrm{H}_\gamma
		\end{array}\right)=
		\left(\begin{array}{cc}
			-\frac{1}{12} \frac{\Delta^{\prime}}{\Delta} & \frac{3}{2}\frac{ \delta}{ \Delta} \\
			-\frac{1}{8}\frac{g_2 \delta}{ \Delta} & \frac{1}{12} \frac{\Delta^{\prime}}{\Delta}
		\end{array}\right) \cdot\left(\begin{array}{c}
			\Pi_\gamma \\
			\mathrm{H}_\gamma
		\end{array}\right) \, ,
	\end{equation}
	where $\Delta(t) = g_2(t)^3-27g_3(t)^2$ is the discriminant of \eqref{eq:gen_weier} and $\delta(t) = 3g_3(t)g_2(t)^{\prime} - 2g_2(t)g_3(t)^{\prime}$. Eliminating $\mathrm{H}_\gamma(t)$, one obtains a second-order linear homogeneous equation satisfied by $\Pi_\gamma(t)$, that is, the Picard--Fuchs operator as in \eqref{eq:PFhom}. Equation \eqref{eq-elliptic_pf} is the Gauss--Manin connection of the elliptic pencil.
	
	\subsection{Coefficients of the Picard--Fuchs equation}
	\label{ss:coeffs}
	
	The polynomial numerators in~\eqref{eq:PQterm} are
	\begin{align}
		\mathcal N_P={}&
		4096s^5t^3-1536s^4t^3+2560s^4t^2-1920s^3t^2
		+512s^3t \notag\\
		&+648s^2t^2-704s^2t-81st^2+32s^2+252st-16s-27t+2,
		\label{eq:NP-appendix}\\
		\mathcal N_Q={}&
		1024s^5t^2-384s^4t^2+512s^4t-32s^3t+64s^3
		-48s^2t \notag\\
		&-112s^2+6st+44s-5.
		\label{eq:NQ-appendix}
	\end{align}
	The rational functions entering~\eqref{eq:KPF} are
	\begin{equation}
		R_1=-\frac12\,
		\frac{
			-s(4s-1)(48s^2t-18st-52s+15)x^2+P_1(s,t)\,x+P_0(s,t)
		}{
			t(16s^2t-6st+4s-1)(256s^3t^2+128s^2t-144st+16s+27t-4)
		},
	\end{equation}
	with
	\begin{align}
		P_1(s,t)&=1536s^5t^2-576s^4t^2+768s^4t+48s^3t+96s^3-132s^2t-272s^2+18st+122s-15,\\
		P_0(s,t)&=-384s^4t^2+144s^3t^2-288s^3t+96s^2t+80s^2-6st-40s+5,
	\end{align}
	and
	\begin{equation}
		R_3=-\frac12\,\frac{N_3}{D_3},
		\qquad
		\begin{aligned}
			N_3={}&-2s(16s^2t-6st+4s-1)x^3+(-3st+8s-2)x^2\\
			&+P_1^{(3)}(s,t)x+t(16s^2t-12s+3),\\
			D_3={}&t(256s^3t^2+128s^2t-144st+16s+27t-4),
		\end{aligned}
	\end{equation}
	where
	\(P_1^{(3)}(s,t)=-64s^3t^2-48s^2t+50st-8s-9t+2\).
	Both are used only through
	the combination $G=R_1K+R_3K^3$ of Section~\ref{sec:SW}, whose boundary
	values at $x=x_0(s,t)$ and $x=0$ give the closed-form
	inhomogeneities~\eqref{eq:GMinhom1}--\eqref{eq:GMinhom2}.

\end{document}